\newif\ifpazo
\DeclareMathAlphabet{\mathcal}{OMS}{cmsy}{m}{n}
\DeclareMathAlphabet{\mathbfcal}{OMS}{cmsy}{b}{n}
\crefname{appendix}{Section}{Sections}
\Crefname{lem}{Lemma}{Lemmas}
\newtheorem{note}{Observation}[section]
\def\clap#1{\hbox to 0pt{\hss#1\hss}}
\newcommand{\xqedhere}[2]{\rlap{\hbox to#1{\hfil\llap{\ensuremath{#2}}}}}
\newcommand{\eps}{\varepsilon}
\newcommand{\Vnc}{V_{\textup{NC}}}
\newcommand{\lambdaM}{\lambda_{\textup{M}}}
\newcommand{\lambdaU}{\lambda_{\textup{U}}}
\newcommand{\advntg}{\alpha}
\renewcommand{\ge}{\geqslant}
\renewcommand{\le}{\leqslant}
\renewcommand{\geq}{\geqslant}
\renewcommand{\leq}{\leqslant}
\newtheorem{theorem}{Theorem}[section]
\newtheorem{lem}[theorem]{Lemma}
\theoremstyle{definition}
\theoremstyle{remark}
\DeclareMathOperator{\poly}{poly}
\DeclareMathOperator{\polylog}{polylog}
\title{Maximum Matching in Two, Three, and a Few More Passes Over Graph Streams}
\author[1]{Sagar Kale} \author[2]{Sumedh Tirodkar} \affil[1]{Dartmouth College.
  Email: sag (at) cs.dartmouth.edu} \affil[2]{TIFR, Mumbai.  Email:
  sumedh.tirodkar (at) tifr.res.in}
\begin{document}
\date{}
\maketitle

\begin{abstract} 
  \renewcommand{\baselinestretch}{1.05} We consider the maximum matching problem
  in the semi-streaming model formalized by Feigenbaum, Kannan, McGregor, Suri,
  and Zhang~\cite{fgnbm} that is inspired by giant graphs of today.  As our main
  result, we give a two-pass $(1/2 + 1/16)$-approximation algorithm for
  triangle-free graphs and a two-pass $(1/2 + 1/32)$-approximation algorithm for
  general graphs; these improve the approximation ratios of $1/2 + 1/52$ for
  bipartite graphs and $1/2 + 1/140$ for general graphs by Konrad, Magniez, and
  Mathieu~\cite{kmm}.  In three passes, we are able to achieve approximation
  ratios of $1/2 + 1/10$ for triangle-free graphs and $1/2 + 1/19.753$ for
  general graphs.  We also give a multi-pass algorithm where we bound the number
  of passes \emph{precisely}---we give a $(2/3 -\eps)$-approximation algorithm
  that uses $2/(3\eps)$ passes for triangle-free graphs and $4/(3\eps)$ passes
  for general graphs.  Our algorithms are simple and combinatorial, use
  $O(n \log n)$ space, and (can be implemented to) have $O(1)$ update time per
  edge.

  For general graphs, our multi-pass algorithm improves the best known
  \emph{deterministic} algorithms in terms of the number of passes:
  \begin{itemize}
    \item Ahn and Guha~\cite{ahnguha} give a $(2/3 - \eps)$-approximation
    algorithm that uses $O(\log(1/\eps)/\eps^2)$ passes, whereas our
    $(2/3 - \eps)$-approximation algorithm uses $4/(3\eps)$ passes;
    \item they also give a $(1-\eps)$-approximation algorithm that uses
    $O(\log n \cdot \poly(1/\eps))$ passes, where $n$ is the number of vertices
    of the input graph; although our algorithm is $(2/3 - \eps)$-approximation,
    our number of passes do not depend on $n$.
  \end{itemize}

  Earlier multi-pass algorithms either have a large constant inside big-$O$
  notation for the number of passes~\cite{Eggert2012} or the constant cannot be
  determined due to the involved analysis~\cite{mwms,ahnguha}, so our multi-pass
  algorithm should use much fewer passes for approximation ratios bounded
  slightly below $2/3$.
\end{abstract}

 \thispagestyle{empty}
 \addtocounter{page}{-1}
 \newpage

\section{Introduction}
\label{sec:intro}
Maximum matching is a well-studied problem in a variety of computational models.
We consider it in the semi-streaming model formalized by Feigenbaum, Kannan,
McGregor, Suri, and Zhang~\cite{fgnbm} that is inspired by generation of
ginormous graphs in recent times.  A graph stream is an (adversarial) sequence
of the edges of a graph, and a semi-streaming algorithm must access the edges in
the given order and use $O(n\polylog n)$ space only, where $n$ is the number of
vertices; note that a matching can have size $\Omega(n)$, so $\Omega(n\log n)$
space is necessary.  The number of times an algorithm goes over a stream of
edges is called the number of \emph{passes}.  A trivial $(1/2)$-approximation
algorithm that can be easily implemented as a one-pass semi-streaming algorithm
is to output a maximal matching.
Since the formalization of the semi-streaming model more than a decade ago, the
problem of finding a better than $(1/2)$-approximation algorithm or proving that
one cannot do better has baffled researchers~\cite{subl60}.  In a step towards
resolving this, Goel, Kapralov, and Khanna~\cite{gkk} proved that for any
$\eps >0$, a one-pass semi-streaming $(2/3 + \eps)$-approximation algorithm does
not exist; Kapralov~\cite{Kapralov13}, building on those techniques, showed
non-existence of one-pass semi-streaming $(1- 1/e+\eps)$-approximation
algorithms for any $\eps > 0$.  A natural next question is: Can we do better in,
say, two passes or three passes?  In answering that, Konrad, Magniez, and
Mathieu~\cite{kmm} gave three-pass and two-pass algorithms that output
matchings that are better than $(1/2)$-approximate.  In this work, we give
algorithms that improve their approximation ratios for two-pass and three-pass
algorithms.  We also give a multi-pass algorithm that does better than the best
known multi-pass algorithms for at least initial few passes.  We are able to
bound the number of passes precisely: we give a $(2/3 -\eps)$-approximation
algorithm that uses $2/(3\eps)$ passes for triangle-free graphs and $4/(3\eps)$
passes for general graphs.  Earlier works either have a large constant inside
the big-$O$ notation for the number of passes~\cite{Eggert2012} or the constant
cannot be determined due to the involved analysis~\cite{mwms,ahnguha}.  For
example, the $(1-\eps)$-approximation algorithm by Eggert et
al.~\cite{Eggert2012} potentially uses $288/\eps^5$ passes, and for the
$(1-\eps)$-approximation algorithms by McGregor~\cite{mwms} and Ahn and
Guha~\cite{ahnguha}, the constants inside the big-$O$ bound cannot be determined
due to the involved analysis.  The $(2/3 -\eps)$-approximation algorithm by
Feigenbaum et al.~\cite{fgnbm} uses $O(\log(1/\eps)/\eps)$ passes, which is
$O(\log(1/\eps))$ factor larger than the number of passes we use to get the same
approximation ratio.  Our algorithms are simple and combinatorial, use
$O(n \log n)$ space, and (can be implemented to) have $O(1)$ update time per
edge.  We also give an explicit and tight analysis of the three-pass algorithm
by Konrad et al.~\cite{kmm} that is reminiscent of Feigenbaum et
al.'s~\cite{fgnbm} multi-pass algorithm.

\paragraph{Technical overview:}

If we can find a matching $M$ such that there are no augmenting paths of length
$3$ in $M \cup M^*$, where $M^*$ is a maximum matching, then $M$ is
$(2/3)$-approximate, i.e., $(1/2 + 1/6)$-approximate.  This is because, in each
connected component of $M\cup M^*$, the ratio of $M$-edges to $M^*$-edges is at
least $2/3$.  This is the basis for the $(2/3 -\eps)$-approximation algorithm by
Feigenbaum et al.~\cite{fgnbm} that uses $O(\log(1/\eps)/\eps)$ passes.  The
same idea is used by Konrad et al.~\cite{kmm} in the analysis of their two-pass
algorithms.  In the first pass, they find a maximal matching $M_0$ and some
subset of support edges, say $S$.  If $M_0$ is so bad that $M_0\cup M^*$ is
almost entirely made up of augmenting paths of length $3$ (i.e.,
$|M_0| \approx |M^*|/2$), then by the end of the second pass, they manage to
augment (using length-$3$ augmentations) a constant fraction of $M_0$ using $S$
and a fresh access to the edges, resulting in a better than
$(1/2)$-approximation. On the other hand, if $M_0$ is not so bad, then they
already have a good matching.  One limitation this idea faces is that a fraction
of the edges in $S$ may become useless for an augmentation if both its endpoints
get matched in $M_0$ by the end of the first pass.  Our main result is a
two-pass algorithm (described in Section~\ref{sec:2pimp2}) that differs in two
ways from the former approach.  Firstly, in the first pass, we only find a
maximal matching $M_0$ so that in the second pass, where we maintain a set $S$
of support edges, $S$ would not contain ``useless'' edges.  Secondly, any
augmentation in our algorithm happens immediately when an edge arrives if it
forms an augmenting path of length $3$ with edges in $M_0$ and $S$.

\paragraph{Our results:} In light of the discussion so far, one way to evaluate
an algorithm is how much advantage it gains over the $(1/2)$-approximate maximal
matching found in the first pass.  We summarize our two-pass and three-pass
results in~\Cref{tab:1} and multi-pass results in~\Cref{tab:2}.  We stress that
we are able to bound the number of passes \emph{precisely}, without big-$O$
notation.
For general graphs, our multi-pass algorithm improves the best known
\emph{deterministic} algorithms in terms of number of passes---see the third
multi-row of~\Cref{tab:2}.  We note that our multi-pass algorithm is not just a
repetition of the second pass of our two-pass algorithm.  Such a repetition will
give an asymptotically worse number of passes (see, for example, the multi-pass
algorithm due to Feigenbaum et al.~\cite{fgnbm}; the first row of~\Cref{tab:2}).  We
carefully choose the parameters for each pass to get the required number of
passes.  Also note that~\Cref{tab:1} shows \emph{advantages} over a maximal
matching---an algorithm is said to have advantage $\alpha$ if it is a
$(1/2 + \alpha)$-approximation algorithm (because a maximal matching is
$(1/2)$-approximate).

\begin{table}[h]
  \caption{Advantages over a maximal matching---advantage $\alpha$ means $(1/2 +
    \alpha)$-approximation.}
  \label{tab:1}
  \begin{center}
    \begin{tabular}{|l|l|l|l|} 
      \hline
      Problem & Previous work & Advantage & Advantage in this work \\ \hline 
      \hline Bipartite two-pass & Esfandiari et al.~\cite{Esf2017}  & $1/12\;\;$
      &\multirow{2}{*}{Not considered separately}\\  
      \cline{1-3} Bipartite three-pass & Esfandiari et al.~\cite{Esf2017}&
      $1/9.52\;\;$ &\\ 
      \hline Triangle-free two-pass & \multicolumn{2}{|c|}{\multirow{2}{*}{Not
          considered separately}}& $1/16\;\;\;\;\;\;\,$ (in
      \Cref{sec:2pimp2})\\
      \cline{1-1} \cline{4-4}Triangle-free three-pass & \multicolumn{2}{|c|}{} &
      $1/10\;\;\;\;\;\;\,$ (in \Cref{sec:3ptf})\\
      \hline General two-pass & Konrad et al.~\cite{kmm} & $1/140\;\;\;\;\;\;\;$ &
      $1/32\;\;\;\;\;\;\,$ (in \Cref{sec:2pimp2})\\ 
      \hline General three-pass & \multicolumn{2}{|c|}{Not considered
        separately} & $1/19.753$ (in \Cref{sec:3pgeneral})\\  
      \hline 
    \end{tabular}
  \end{center}
\end{table}

\begin{table}[h]
  \caption{Multi-pass algorithms---see~\Cref{sec:multi-pass-algor}.}
  \label{tab:2}
  \begin{center}
    \begin{tabular}{|l|l|l|l|} 
      \hline
      Graph & Results & Approx & \# Passes  \\
      \hline \hline
      \multirow{3}{*}{Bipartite}
      & Feigenbaum et al.~\cite{fgnbm} & $2/3 -\eps$ &
      $O(\log(1/\eps)/\eps)$\\\cline{2-4} 
      &Eggert et al.\cite{Eggert2012} &$1-\eps$ & $288/\eps^5$\\\cline{2-4}
      &Ahn and Guha~\cite{ahnguha} &$1-\eps$ &$O(\log\log(1/\eps)/\eps^2)$\\
      \hline
      Triangle free & This work (in \Cref{sec:multi-pass-algor}) & $2/3 -\eps$ &
      $2/(3\eps)$\\\hline 
      \multirow{4}{*}{General}
      & McGregor~\cite{mwms} \;\emph{randomized} & $1 -\eps$ &
      $O((1/\eps)^{1/\eps})$\\\cline{2-4}
      &Ahn and Guha~\cite{ahnguha} &$2/3-\eps$ &
      $O(\log(1/\eps)/\eps^2)$\\\cline{2-4}
      &Ahn and Guha~\cite{ahnguha} &$1-\eps$ &$O(\log n \cdot
      \poly(1/\eps))$\\\cline{2-4} 
      & This work (in \Cref{sec:multi-pass-algor}) & $2/3 -\eps$ & $4/(3\eps)$\\
      \hline
    \end{tabular}
  \end{center}
\end{table}

\paragraph{Note of independent work}\label{par:note}

The work of Esfandiari et al.~\cite{Esf2017} who claim better approximation
ratios for \emph{bipartite graphs} in two passes and three passes is independent
and almost concurrent.  Our work differs in several aspects.  We consider
triangle-free graphs (superset of bipartite graphs) and general graphs, and we
additionally consider multi-pass algorithms.  Also, their algorithm has a
post-processing step that uses time $O(\sqrt{n}\cdot |E|)$, whereas our
algorithms can be implemented to have $O(1)$ update time per edge.  One further
detail about this appears in~\Cref{sec:issue-gener-lemma}.

\subsection{Related Work}
Karp, Vazirani, and Vazirani~\cite{KarpVV90} gave the celebrated
$(1-1/e)$-competitive randomized online algorithm for bipartite graphs in the
vertex arrival setting.  Goel et al.~\cite{gkk} gave the first one-pass
deterministic algorithm with the same approximation ratio, i.e., $1-1/e$, in the
semi-streaming model in the vertex arrival setting.  For the rest of this
section, results involving $\eps$ hold for any $\eps > 0$.  As mentioned
earlier, Goel, Kapralov, and Khanna~\cite{gkk} proved nonexistence of one-pass
$(2/3 + \eps)$-approximation semi-streaming algorithms, which was extended to
$(1 - 1/e +\eps)$-approximation algorithms by Kapralov~\cite{Kapralov13}.  On
the algorithms side, nothing better than outputting a maximal matching, which is
$(1/2)$-approximate, is known.  Closing this gap is considered an outstanding
open problem in the streaming community~\cite{subl60}.

On the multi-pass front, in the semi-streaming model, Feigenbaum et
al.~\cite{fgnbm} gave a $(2/3-\eps)$-approximation algorithm for bipartite
graphs that uses $O(\log(1/\eps)/\eps)$ passes; McGregor~\cite{mwms} improved it
to give a $(1-\eps)$-approximation algorithm for general graphs that uses
$O((1/\eps)^{1/\eps})$ passes. For bipartite graphs, this was again improved by
Eggert et al.~\cite{Eggert2012} who gave a $(1-\eps)$-approximation
$O((1/\eps)^5)$-pass algorithm.  Ahn and Guha~\cite{ahnguha} gave a
linear-programming based $(1-\eps)$-approximation
$O(\log\log(1/\eps)/\eps^2)$-pass algorithm for bipartite graphs.  For general
graphs, their $(1-\eps)$-approximation algorithm uses number of passes
proportional to $\log n$, so it is worse than that of McGregor~\cite{mwms}.

For the problem of one-pass weighted matching, there is a line of work starting
with Feigenbaum et al.~\cite{fgnbm} giving a $6$-approximation semi-streaming
algorithm.  Subsequent results improved this approximation ratio: see
McGregor~\cite{mwms}, Zelke~\cite{zelke}, Epstein et al.~\cite{epsteinlms},
Crouch and Stubbs~\cite{crouchstubbs}, Grigorescu et al.\cite{grigorescuMZ}, and
most recently in a breakthrough, giving a $(2+\eps)$-approximation
semi-streaming algorithm, Paz and Schwartzman~\cite{pazschwa}.  The multi-pass
version of the problem was considered first by McGregor~\cite{mwms}, then by Ahn
and Guha~\cite{ahnguha}.  Chakrabarti and Kale~\cite{ChakraKaleSubmod} and
Chekuri et al.~\cite{cheguqu} consider a more general version of the matching
problem where a submodular function is defined on the edges of the input graph.

The problem of estimating the \emph{size} of a maximum matching (instead of
outputting the actual matching) has also been considered. We mention Kapralov et
al.~\cite{KapralovKhannaSudanMMSE}, Esfandiari et al.~\cite{Esfandiari}, Bury
and Schwiegelshohn~\cite{Bury2015}, and Assadi et al.~\cite{assadiklsoda17}.

In the dynamic streams, edges of the input graph can be removed as well.  The
works of Konrad~\cite{KonradDyn}, Assadi et al.~\cite{AssadiKLY}, and Chitnis et
al.~\cite{Chitnisetl} consider the maximum matching problem in dynamic streams.

\subsection{Organization of the Paper}
\label{sec:organization-paper}
After setting up notation in~\Cref{sec:prelim}, we give a tight analysis of the
three-pass algorithm for bipartite graphs by Konrad et al.~\cite{kmm}
in~\Cref{sec:3pkmm}.  In~\Cref{sec:simple-two-pass}, we see our simple two-pass
algorithm for triangle-free graphs.  Then in~\Cref{sec:2pimp2}, we see our main
result---the improved two-pass algorithm, and then we see the multi-pass
algorithm in~\Cref{sec:multi-pass-algor}.  The results that are not covered in
the main sections are covered in the appendix.


\section{Preliminaries}
\label{sec:prelim}
We work on graph \emph{streams}.  The input is a sequence of edges (stream) of a
graph $G=(V,E)$, where $V$ is the set of vertices and $E$ is the set of edges; a
bipartite graph is denoted as $G = (A,B,E)$.  A streaming algorithm may go over
the stream a few times (multi-pass) and use space $O(n \polylog n)$, where
$n = |V|$.  In this paper, we give algorithms that make two, three, or a few
more passes over the input graph stream.  A matching $M$ is a subset of edges
such that each vertex has at most one edge in $M$ incident to it.  The maximum
cardinality matching problem, or maximum matching, for short, is to find a
largest matching in the given graph.  Our goal is to design streaming algorithms
for maximum matching.

For a subset $F$ of edges and a subset $U$ of vertices, we denote by
$U(F) \subseteq U$ the set of vertices in $U$ that have an edge in $F$ incident
on them.  Conversely, we denote by $F(U)\subseteq F$ the set of edges in $F$
that have an endpoint in $U$.  For a subset $F$ of edges and a vertex
$v\in V(F)$, we denote by $N_F(v)$ the set of $v$'s neighbors in the graph
$(V(F),F)$, and we define $\deg_F(v) := |N_F(v)|$.

In the first pass, our algorithms compute a \emph{maximal} matching which we
denote by $M_0$.  We use $M^*$ to indicate a matching of maximum cardinality.
Assume that $M_0$ and $M^*$ are given.  For $i \in \{3,5,7,\ldots\}$, a
connected component of $M_0\cup M^*$ that is a path of length $i$ is called an
$i$-augmenting path (nonaugmenting otherwise).  We say that an edge in $M_0$ is
$3$-augmentable if it belongs to a $3$-augmenting path, otherwise we say that it
is non-$3$-augmentable.

\begin{lem}[Lemma 1 in~\cite{kmm}]
  \label{lem:kmmlem1}
  Let $\advntg \ge 0$, $M_0$ be a maximal matching in $G$, and $M^*$ be a maximum
  matching in $G$ such that $|M_0| \le (1/2 + \advntg) |M^*|$.  Then the number of
  $3$-augmentable edges in $M_0$ is at least $(1/2 - 3\advntg)|M^*|$, and the
  number of non-$3$-augmentable edges in $M_0$ is at most $4\advntg|M^*|$.
\end{lem}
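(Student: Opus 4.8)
The plan is to analyze the graph $M_0 \cup M^*$, whose connected components are paths and even cycles with edges alternating between $M_0$ and $M^*$. First I would record two structural facts that pin down the components. Since $M^*$ is a maximum matching, $M_0 \cup M^*$ contains no augmenting path for $M^*$, so no component is an odd path carrying more $M_0$-edges than $M^*$-edges; consequently every odd path (an augmenting path for $M_0$) has exactly one more $M^*$-edge than $M_0$-edge, and hence contributes exactly $1$ to $|M^*| - |M_0|$. Since $M_0$ is maximal, no component is a single $M^*$-edge with both endpoints unmatched, so every augmenting path has length at least $3$. In particular the $3$-augmentable edges of $M_0$ are exactly the unique $M_0$-edges lying on the $3$-augmenting paths.

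Next I would set up the bookkeeping. Let $p_3$ be the number of $3$-augmenting paths and, for each augmenting path of length at least $5$, write its length as $2k_j + 1$ with $k_j \ge 2$ (so it has $k_j$ edges of $M_0$ and $k_j + 1$ edges of $M^*$); let $p_{\ge 5}$ be the number of such paths and let $N$ count the $M_0$-edges on non-augmenting components, which equals the number of $M^*$-edges there. Counting edges by component type gives $|M_0| = p_3 + \sum_j k_j + N$ and $|M^*| = 2p_3 + \sum_j (k_j + 1) + N$. Thus the number of $3$-augmentable edges is $p_3$ and the number of non-$3$-augmentable edges is $|M_0| - p_3 = \sum_j k_j + N$.

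The crux is to control $p_{\ge 5}$, the number of long augmenting paths, since everything else follows from the two counting identities and the hypothesis $|M_0| \le (1/2 + \advntg)|M^*|$. Eliminating the shared terms yields the identity $\sum_j k_j + N = 2|M_0| - |M^*| + p_{\ge 5}$, and since $|M_0| \le (1/2 + \advntg)|M^*|$ we have $2|M_0| - |M^*| \le 2\advntg|M^*|$. Because each long path has $k_j \ge 2$, we also have $\sum_j k_j \ge 2 p_{\ge 5}$; combining this with $\sum_j k_j \le \sum_j k_j + N \le 2\advntg|M^*| + p_{\ge 5}$ forces $2 p_{\ge 5} \le 2\advntg|M^*| + p_{\ge 5}$, i.e., $p_{\ge 5} \le 2\advntg|M^*|$. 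This last step---converting the observation that each long path ``wastes'' at least two $M_0$-edges into a bound on $p_{\ge 5}$---is the main obstacle, and I expect it to be the only delicate point in the argument.

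Finally I would assemble the two conclusions. For the non-$3$-augmentable edges, the identity above gives $|M_0| - p_3 = 2|M_0| - |M^*| + p_{\ge 5} \le 2\advntg|M^*| + 2\advntg|M^*| = 4\advntg|M^*|$. For the $3$-augmentable edges, using that each augmenting path contributes $1$ to $|M^*| - |M_0|$ we have $p_3 + p_{\ge 5} = |M^*| - |M_0| \ge (1/2 - \advntg)|M^*|$, whence $p_3 \ge (1/2 - \advntg)|M^*| - p_{\ge 5} \ge (1/2 - 3\advntg)|M^*|$, as required.
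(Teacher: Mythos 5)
Your proof is correct and takes essentially the same approach as the paper: both arguments count $M_0$- and $M^*$-edges over the connected components of $M_0 \cup M^*$, using that a $3$-augmenting path has one $M_0$-edge against two $M^*$-edges while every other component has an $M^*$-to-$M_0$ edge ratio of at most $3/2$ (your observation that long augmenting paths have $k_j \ge 2$ is exactly this fact). The paper folds the count into the single inequality $|M^*| \le 2k + \tfrac{3}{2}(|M_0|-k)$ and solves for $k$, whereas you make the path-length distribution explicit and pass through the intermediate bound $p_{\ge 5} \le 2\advntg|M^*|$; the mathematical content is the same.
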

\begin{proof}
  Let the number of $3$-augmentable edges in $M_0$ be $k$. For each
  $3$-augmentable edge in $M_0$, there are two edges in $M^*$ incident on it.
  Also, each non-$3$-augmentable edge in $M_0$ lies in a connected component of
  $M_0 \cup M^*$ in which the ratio of the number of $M^*$-edges to the number
  of $M_0$-edges is at most $3/2$.  Hence, \allowdisplaybreaks
  \begin{align*}
    |M^*| &\le 2k + \frac{3}{2} (|M_0| - k)
    &&\text{ since there are $|M_0|-k$ non-$3$-augmentable edges}\,,\\
          &\le 2k + \frac{3}{2} \left(\left(\frac{1}{2} + \advntg \right)|M^*| -
            k\right)
    &&\text{ because $|M_0| \le (1/2 + \advntg) |M^*|$}\,,\\
          &= \frac{1}{2}k + \left(\frac{3}{4} + \frac{3}{2}\advntg
            \right)|M^*|\,,
  \end{align*}
  which, after simplification, gives $k \ge (1/2 - 3\advntg) |M^*|$.  And the
  number of non-$3$-augmentable edges in $M_0$ is
  $|M_0| - k \le |M_0| - (1/2 - 3\advntg)|M^*| \le (1/2 +\advntg - 1/2 +
  3\advntg)|M^*| = 4\advntg|M^*|$.
\end{proof}
We make the following simple, yet crucial, observation.
\begin{note}
  Let $M_0$ be a maximal matching.  Then $V(M_0)$ is a vertex cover, and there
  is no edge between any two vertices in $V\setminus V(M_0)$.  Therefore, even
  if the input graph is not a bipartite graph, the set of edges incident on
  $V\setminus V(M_0)$, i.e., $E(V\setminus V(M_0))$ give rise to a bipartite
  graph with bipartition $(V\setminus V(M_0), V(M_0))$.
\end{note}

For all the algorithms in this paper, it can be verified that their space
complexity is $O(n\log n)$ and update time per edge is $O(1)$.  We also ignore
floors and ceilings for the sake of exposition.

\section{Analyzing the Three Pass Algorithm for Bipartite
  Graphs}\label{sec:3pkmm}
We analyze the three-pass algorithm for bipartite graphs given by Konrad et
al.~\cite{kmm}, i.e.,~\Cref{alg:3pkmm} by considering the distribution of
lengths of augmenting paths.  We also give a tight example.
\algrenewcommand\algorithmicforall{\textbf{foreach}}
\begin{algorithm}[!ht]
  \caption{~~Three-pass algorithm for bipartite graphs due to Konrad et
    al.~\cite{kmm} \label{alg:3pkmm}}
  \begin{algorithmic}[1]
    \State In the first pass, find a maximal matching $M_0$.
    \State In the second pass, find a maximal matching
    \begin{itemize}
    \item $M_A$ in $F_2 := \{ab : a \in A(M_0), b \in B\setminus B(M_0)\}$
    (see~\Cref{fig:3pkmm}). 
    \end{itemize}
    \State In the third pass, find a maximal matching
    \begin{itemize}
    \item $M_B$ in
      $F_3 := \{ab : a \in A \setminus A(M_0) \text{ and } \exists a' \in A(M_A)
      \text{ such that } a'b \in M_0\}$.
    \end{itemize}
    \State Augment $M_0$ using edges in $M_A$ and $M_B$ and return the resulting
    matching $M$.
  \end{algorithmic}
\end{algorithm}
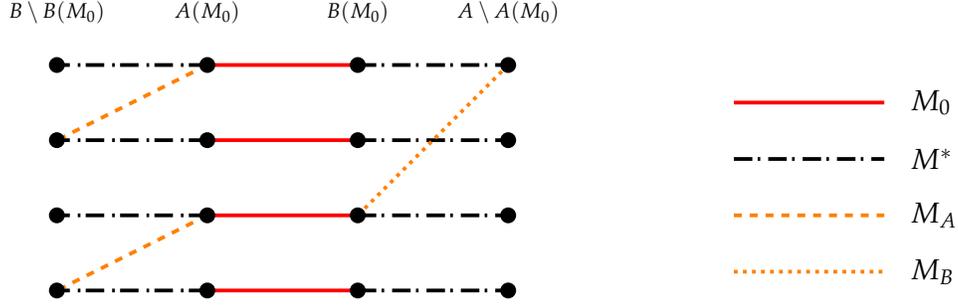
\begin{figure}
\centering
\begin{tikzpicture}[line width=0.5mm]

\draw [red] (0,0) -- (2,0);
\draw [red] (0,1) -- (2,1);
\draw [red] (0,2) -- (2,2);
\draw [red] (0,3) -- (2,3);

\draw [dash pattern={on 7pt off 2pt on 1pt off 3pt}] (2,0) -- (4,0);
\draw [dash pattern={on 7pt off 2pt on 1pt off 3pt}] (2,1) -- (4,1);
\draw [dash pattern={on 7pt off 2pt on 1pt off 3pt}] (2,2) -- (4,2);
\draw [dash pattern={on 7pt off 2pt on 1pt off 3pt}] (2,3) -- (4,3);

\draw [dash pattern={on 7pt off 2pt on 1pt off 3pt}] (0,0) -- (-2,0);
\draw [dash pattern={on 7pt off 2pt on 1pt off 3pt}] (0,1) -- (-2,1);
\draw [dash pattern={on 7pt off 2pt on 1pt off 3pt}] (0,2) -- (-2,2);
\draw [dash pattern={on 7pt off 2pt on 1pt off 3pt}] (0,3) -- (-2,3);

\draw [dashed,orange] (0,1) -- (-2,0);
\draw [dashed,orange] (0,3) -- (-2,2);

\draw [dotted,orange] (4,3) -- (2,1);

\draw [red] (7,2.5) -- (9,2.5);
\draw [dash pattern={on 7pt off 2pt on 1pt off 3pt}] (7,1.75) -- (9,1.75);
\draw [dashed, orange] (7,1) -- (9,1);
\draw [dotted,orange] (7,0.25) -- (9,0.25);

\draw (10.5,2.5) node[anchor=east,text width=1cm] {$M_0$};
\draw (10.5,1.75) node[anchor=east,text width=1cm] {$M^*$};
\draw (10.5,1) node[anchor=east,text width=1cm] {$M_A$};
\draw (10.5,0.25) node[anchor=east,text width=1cm] {$M_B$};

\draw (0,4) node[anchor=north] {\scriptsize $A(M_0)$};
\draw (-2,4) node[anchor=north] {\scriptsize $B\setminus B(M_0)$};
\draw (2,4) node[anchor=north] {\scriptsize $B(M_0)$};
\draw (4,4) node[anchor=north] {\scriptsize $A\setminus A(M_0)$};

\fill [color=black] (0,0) circle (3pt);
\fill [color=black] (2,0) circle (3pt);
\fill [color=black] (0,1) circle (3pt);
\fill [color=black] (2,1) circle (3pt);
\fill [color=black] (0,2) circle (3pt);
\fill [color=black] (2,2) circle (3pt);
\fill [color=black] (0,3) circle (3pt);
\fill [color=black] (2,3) circle (3pt);

\fill [color=black] (-2,0) circle (3pt);
\fill [color=black] (4,0) circle (3pt);
\fill [color=black] (-2,1) circle (3pt);
\fill [color=black] (4,1) circle (3pt);
\fill [color=black] (-2,2) circle (3pt);
\fill [color=black] (4,2) circle (3pt);
\fill [color=black] (-2,3) circle (3pt);
\fill [color=black] (4,3) circle (3pt);

\end{tikzpicture}
\caption{Example: state of variables in an execution of
  \Cref{alg:3pkmm}.} \label{fig:3pkmm}
\end{figure}

\begin{theorem}
  \label{thm:3pkmmimpimp}
  \Cref{alg:3pkmm} is a three-pass, semi-streaming,
  $(1/2 +1/10)$-approximation algorithm for maximum matching in bipartite
  graphs.  
\end{theorem}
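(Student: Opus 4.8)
The plan is to first pin down exactly what the returned matching $M$ looks like, then reduce the theorem to a single lower bound on $|M_B|$, and finally extract the constant $1/10$ by balancing that bound against the trivial guarantee $|M|\ge |M_0|$. \emph{Step 1 (structure of $M$).} I would first show $|M| = |M_0| + |M_B|$. In the graph $M_0\cup M_A\cup M_B$ every vertex has degree at most two, so its components are paths and cycles, and the only $M_0$-augmenting components are length-$3$ paths $b' - a - b - a''$ with $ab'\in M_A$, $ab\in M_0$, $ba''\in M_B$. Indeed, each $a''b\in M_B$ has, by the definition of $F_3$, its $M_0$-partner $a\in A(M_A)$, which $M_A$ in turn matches to some $b'\in B\setminus B(M_0)$; since $M_A,M_B$ are matchings these paths are vertex-disjoint. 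Hence the augmentations biject with the edges of $M_B$.

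\emph{Step 2 (reduction via \Cref{lem:kmmlem1}).} Write $|M_0| = (\tfrac12 + \advntg)|M^*|$; since $M_0$ is maximal, $\advntg\ge 0$. If $\advntg \ge \tfrac{1}{10}$ then already $|M| \ge |M_0| \ge (\tfrac12 + \tfrac{1}{10})|M^*|$, so assume $\advntg < \tfrac1{10}$. By \Cref{lem:kmmlem1} there are at least $(\tfrac12 - 3\advntg)|M^*|$ three-augmentable edges and at most $4\advntg|M^*|$ non-$3$-augmentable edges, and the whole statement reduces to proving $|M_B| \ge (\tfrac1{10} - \advntg)|M^*|$.

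\emph{Step 3 (lower-bounding $|M_B|$).} Here I would run the two greedy passes against the components of $M_0\cup M^*$, organized by the \emph{distribution of augmenting-path lengths} $p_1,p_2,\dots$ (so that $|M^*|-|M_0|=\sum_{j\ge 1}p_j$). For a $3$-augmenting path $b^- - a - b - c$, its unmatched-$B$ endpoint gives $ab^-\in F_2$, so maximality of $M_A$ forces either $a\in A(M_A)$—which places $b$ into the set of $F_3$-eligible vertices and makes $cb\in F_3$ available—or else $b^-$ to be consumed through another vertex of $A(M_0)$. I would charge this lossy case to the non-$3$-augmentable edges and to the internal $M_0$-edges of longer augmenting paths (which inflate the surplus $\sum_j p_j$ but cannot themselves be $3$-augmented), bounding the number of diverted $3$-augmenting paths via \Cref{lem:kmmlem1}; combining this with the maximality of $M_B$ in $F_3$ and summing the per-length contributions yields a lower bound on $|M_B|$ that is linear in $\advntg$.

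\emph{Step 4 (balancing and tightness).} Finally I would optimize: the bound from Step~3 decreases in $\advntg$ while $|M_0|=(\tfrac12+\advntg)|M^*|$ increases in $\advntg$, and the worst case is where they meet, which is exactly $\tfrac12+\tfrac1{10}$; this is what forces the constant $1/10$. To certify tightness I would build an explicit instance in which an adversarial stream ordering makes both $M_A$ and $M_B$ cover the ``wrong'' endpoints while \Cref{lem:kmmlem1} holds with equality, so that $|M|=(\tfrac12+\tfrac1{10})|M^*|$. The main obstacle is Step~3: the two maximal-matching passes interact adversarially, a maximal matching may cover the endpoint of an $F_2$- or $F_3$-edge that leads to no completable augmentation, and long augmenting paths can be missed outright while simultaneously absorbing second-pass edges away from $3$-augmenting paths. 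Controlling these two effects together—so that the guaranteed number of augmentations is exactly $(\tfrac1{10}-\advntg)|M^*|$ and no smaller—is the delicate part, and is precisely what the augmenting-path-length distribution together with the matching tight example is meant to pin down.
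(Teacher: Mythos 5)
Your Steps 1, 2, and 4 are sound: indeed $|M|=|M_0|+|M_B|$ (each $M_B$-edge completes a vertex-disjoint length-$3$ augmenting path through $M_0$ and $M_A$), and reducing the theorem to $|M_B|\ge(\tfrac1{10}-\advntg)|M^*|$ for $\advntg<\tfrac1{10}$ is the right target. The genuine gap is Step 3. Your charging scheme credits only $3$-augmenting paths as sources of third-pass augmentations, and casts longer augmenting paths and non-$3$-augmentable edges (bounded via \Cref{lem:kmmlem1}) purely as absorbers of diverted $M_A$-edges. Carried out, that argument gives the following: writing $k_3$ for the number of $3$-augmenting paths, each diverted $3$-path is charged either to an enabled $3$-path or to one of the at most $4\advntg|M^*|$ non-$3$-augmentable edges, so at least $(k_3-4\advntg|M^*|)/2\ge(\tfrac14-\tfrac72\advntg)|M^*|$ paths are enabled, and maximality of $M_B$ in $F_3$ costs another factor $2$, yielding $|M_B|\ge(\tfrac18-\tfrac74\advntg)|M^*|$. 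This is below your target whenever $\advntg>\tfrac1{30}$, and balancing it against $|M_0|$ proves only a $(\tfrac12+\tfrac1{14})$-approximation. This is not a presentational issue: the paper's appendix carries out exactly this \Cref{lem:kmmlem1}-based analysis and shows it tops out at $\tfrac12+\tfrac1{14}$, while the tight $\tfrac12+\tfrac1{10}$ proof in the body never invokes \Cref{lem:kmmlem1} at all.

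The idea you are missing is that \emph{every} augmenting path, of every length $i$, supplies a usable vertex: the paper calls the second-to-last $A$-vertex $a_{(i-1)/2}$ of an $i$-augmenting path \emph{good}, because if $M_A$ covers it then the path's terminal $M^*$-edge $b_{(i+1)/2}a_{(i+1)/2}$ lies in $F_3$ and the third pass can augment the path's last $M_0$-edge. (Your parenthetical that internal $M_0$-edges of longer paths ``cannot themselves be $3$-augmented'' is precisely the error: they are not $3$-augmentable within $M_0\cup M^*$, but the algorithm does $3$-augment them using an $M_A$-edge and an $F_3$-edge.) With this, the count of good vertices becomes $\sum_i k_i=|M^*|-|M_0|$ rather than $k_3$; maximality of $M_A$ gives $|M_A|\ge\tfrac12\sum_i k_i$, at most $|M_0|-\sum_i k_i$ of its edges land on bad vertices, and one more factor $2$ in the third pass gives $|M_B|\ge\tfrac34(|M^*|-|M_0|)-\tfrac12|M_0|$, which does imply your target for all $\advntg\le\tfrac1{10}$, with no case analysis on path lengths and no use of \Cref{lem:kmmlem1}. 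A concrete instance where your scheme breaks: take two $3$-paths and one $5$-path (so $|M_0|=4$, $|M^*|=7$, $\advntg=\tfrac1{14}<\tfrac1{10}$), and add two edges sending the free $B$-endpoints of the two $3$-paths onto the two $A$-vertices of the $5$-path; an adversarial ordering makes $M_A$ consist of exactly these two edges, so no $3$-path is enabled and your accounting certifies nothing, while your target demands $|M_B|\ge\tfrac15$---and the one augmentation that does exist goes through the $5$-path's good vertex, i.e., through an edge your scheme wrote off.
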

\begin{proof}
  Without loss of generality, let $M^*$ be a maximum matching such that all
  nonaugmenting connected components of $M_0 \cup M^*$ are single edges.  For
  $i = \{3,5,7,\ldots\}$, let $k_i$ denote the number of $i$-augmenting paths in
  $M_0 \cup M^*$, and let $k = |M_0 \cap M^*|$.  Then
  \begin{equation}
    \label{eq:m0mstarval}
    |M_0| = k + \sum_i \frac{i-1}{2}k_i \;\;\;\text{ and }\;\;\;
    |M^*| = k + \sum_i \frac{i+1}{2}k_i\,.  
  \end{equation}
  Consider an $i$-augmenting path $b_1a_1b_2a_2b_3\cdots b_{(i+1)/2}a_{(i+1)/2}$
  in $M_0 \cup M^*$, where for each $j$, we have $a_j \in A$ and $b_j \in B$.
  We call the vertex $a_{(i-1)/2}$ a good vertex, because an edge in $M_A$
  incident to $a_{(i-1)/2}$ can potentially be augmented using the edge
  $b_{(i+1)/2}a_{(i+1)/2}$.  To elaborate, consider the set of all edges in
  $M_A$ incident on good vertices; call it $M'_A$.  Consider the set of edges of
  the type $b_{(i+1)/2}a_{(i+1)/2}$ from each $i$-augmenting path; call it
  $M_F$.  Note that $M_F$ is a matching.  Then we can augment $M_0$ using $M'_A$
  and $M_F$ by as much as $|M'_A|$.

  There is a matching of size $\sum_i k_i$ in $F_2$ formed by edges of the type
  $b_1a_1$ from each $i$-augmenting path.  Since $M_A$ is maximal in $F_2$, we
  have $|M_A| \ge (\sum_i k_i)/2$.  Now, the number of good vertices is
  $\sum_i k_i$; therefore, the number of bad (i.e., not good) vertices is
  $|M_0| - \sum_i k_i$.  So the number of edges in $M_A$ incident on good
  vertices (see~\Cref{fig:3pkmmte})
  \[
    |M'_A|
    \ge \frac{\sum_i k_i}{2} -  \left(|M_0| - \sum_i k_i\right)
    = \frac{3}{2}\sum_i k_i - |M_0|\,.
  \]
  Let
  $B_G := \{ b \in B: \exists a \in A(M'_A) \text{ such that } ab \in M_0\}$.
  Let $M'_F \subseteq M_F$ be defined as $M'_F := \{ba \in M_F : b \in B_G\}$.
  Then we know that $|M'_F| =|M'_A|$ and $M'_F\subseteq M_F \subseteq F_3$.
  Since we select a maximal matching in $F_3$ in the third pass,
  \begin{equation}
    \label{eq:mb5apbound}
    |M_B| \ge \frac{|M'_F|}{2} = \frac{|M'_A|}{2}
    =\frac{3}{4}\sum_i k_i - \frac{|M_0|}{2}\,.
  \end{equation}
  So the output size
  \begin{align*}
    |M| &= |M_0| + |M_B|\\
        &\ge |M_0| + \frac{3}{4}\sum_i k_i - \frac{|M_0|}{2}
        &&\text{ by~\eqref{eq:m0mstarval}~and~\eqref{eq:mb5apbound},}\\
        &= \frac{|M_0|}{2} + \frac{3}{4}(|M^*| - |M_0|)
        &&\text{ by~\eqref{eq:m0mstarval}, $\sum_i k_i = |M^*| - |M_0|$}\,,
  \end{align*}
  i.e., $|M| \ge 3|M^*|/4 - |M_0|/4$, but we also have $|M|\ge |M_0|$, hence
  \[
    |M| \ge \max\left\{|M_0|, \frac{3}{4}|M^*| - \frac{1}{4}|M_0|\right\}\,.
  \]
  So the bound is minimized when
  $|M_0| = 3|M^*|/4 - |M_0|/4 = 3|M^*|/5 = (1/2 + 1/10)|M^*|$.
\end{proof}
As we can see in the proof above, the worst case happens when
$|M| = |M_0| = 3|M^*|/5$.  Setting $k_3 = k_5 \ge 1$, $k=0$, and $k_i=0$ for
$i > 5$ gives us the tight example shown in~\Cref{fig:3pkmmte}.
\begin{figure}
\centering
\begin{tikzpicture}[line width=0.5mm]

\draw [red] (0,0) -- (2,0);
\draw [red] (0,1) -- (2,1);
\draw [red] (0,2) -- (2,2);

\draw [dash pattern={on 7pt off 2pt on 1pt off 3pt}] (2,0) -- (4,0);
\draw [dash pattern={on 7pt off 2pt on 1pt off 3pt}] (2,1) -- (4,1);
\draw [dash pattern={on 7pt off 2pt on 1pt off 3pt}] (-2,2) -- (0,2);
\draw [dash pattern={on 7pt off 2pt on 1pt off 3pt}] (-2,0) -- (0,0);
\draw [dash pattern={on 7pt off 2pt on 1pt off 3pt}] (0,1) -- (2,2);

\draw [dashed,orange] (0,2) -- (-2,0);

\draw [red] (6,1.85) -- (8,1.85);
\draw [dash pattern={on 7pt off 2pt on 1pt off 3pt}] (6,1.1) -- (8,1.1);
\draw [dashed, orange] (6,0.35) -- (8,0.35);

\draw (9.5,1.85) node[anchor=east,text width=1cm] {$M_0$};
\draw (9.5,1.1) node[anchor=east,text width=1cm] {$M^*$};
\draw (9.5,0.35) node[anchor=east,text width=1cm] {$M_A$};

\draw (0,3) node[anchor=north] {\scriptsize $A(M_0)$};
\draw (-2,3) node[anchor=north] {\scriptsize $B\setminus B(M_0)$};
\draw (2,3) node[anchor=north] {\scriptsize $B(M_0)$};
\draw (4,3) node[anchor=north] {\scriptsize $A\setminus A(M_0)$};

\fill [color=black] (0,0) circle (3pt);
\fill [color=black] (2,0) circle (3pt);
\fill [color=black] (0,1) circle (3pt);
\fill [color=black] (2,1) circle (3pt);
\fill [color=black] (0,2) circle (3pt);
\fill [color=black] (2,2) circle (3pt);

\fill [color=black] (-2,0) circle (3pt);
\fill [color=black] (4,1) circle (3pt);
\fill [color=black] (-2,2) circle (3pt);
\fill [color=black] (4,0) circle (3pt);

\end{tikzpicture}
\caption{Tight example for~\Cref{alg:3pkmm}: $M_A$ has only one edge that
    lands on a bad vertex and cannot be augmented in the third pass. So $|M| =
    |M_0| = 3$ and $|M^*| = 5$.}\label{fig:3pkmmte}
\end{figure}



\section{A Simple Two Pass Algorithm for Triangle Free Graphs}
\label{sec:simple-two-pass}
Before seeing our main result, we see a simple two pass algorithm for
triangle-free graphs. The function \textproc{Semi}$()$ in~\Cref{alg:2ptf}
greedily computes a subset of edges such that each vertex in $X$ has degree at
most one and each vertex in $Y$ has degree at most $\lambda$; we call such a
subset a $(\lambda,X,Y)$-semi-matching (Konrad et al.~\cite{kmm} call this a
$\lambda$-bounded semi-matching).  In~\Cref{alg:2ptf}, we find a maximal
matching $M_0$ in the first pass, and, in the second pass, we find a
$(\lambda,V(M_0),V\setminus V(M_0))$-semi-matching $S$.  After the second pass,
we greedily augment edges in $M_0$ one by one using edges in $S$.

\algrenewcommand\algorithmicforall{\textbf{foreach}}
\begin{algorithm}[!ht]
  \caption{~Two-pass algorithm for triangle-free graphs \label{alg:2ptf}}
  \begin{algorithmic}[1]
    \State In the first pass: $M_0\gets$ maximal matching
    \State In the second pass: $S\gets\textproc{Semi}(\lambda,V(M_0),V\setminus
    V(M_0))$ (see~\Cref{fig:2ptfex}).
    \State After the second pass, augment $M_0$ greedily using edges in $S$ to
    get $M$; output $M$.
    \Function{Semi}{$\lambda,X,Y$}\Comment{based on
      Algorithm 7 in Konrad et al.~\cite{kmm}}
    \State $S\gets\emptyset$
    \ForAll{edge $xy$ such that $x\in X$, $y\in Y$}
    \If{$\deg_S(x) = 0$ and $\deg_S(y) \le \lambda - 1$}
    \State $S\gets S\cup \{xy\}$
    \EndIf
    \EndFor
    \EndFunction
  \end{algorithmic}
\end{algorithm}
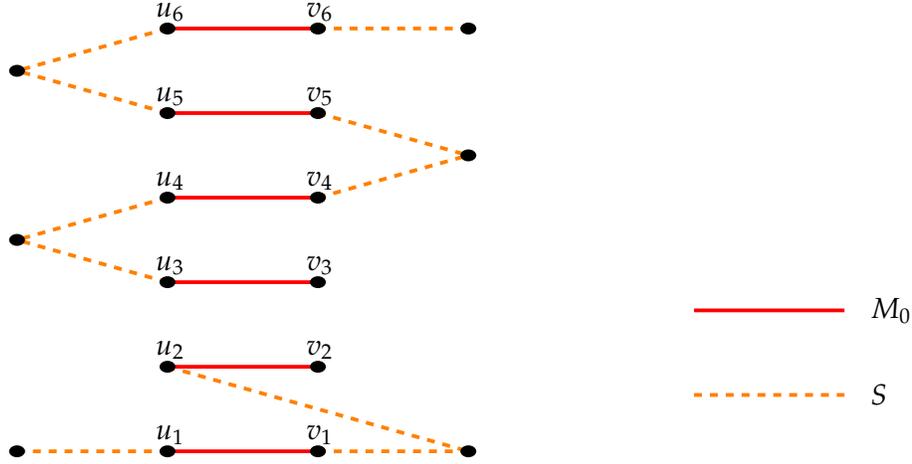
\begin{figure}
\centering
\begin{tikzpicture}[yscale=0.75,line width=0.5mm]
\draw [red] (0,0) -- (2,0);
\draw [red] (0,1.5) -- (2,1.5);
\draw [red] (0,3) -- (2,3);
\draw [red] (0,4.5) -- (2,4.5);
\draw [red] (0,6) -- (2,6);
\draw [red] (0,7.5) -- (2,7.5);

\draw [dashed,orange] (0,7.5) -- (-2,6.75);
\draw [dashed,orange] (0,6) -- (-2,6.75);
\draw [dashed,orange] (0,4.5) -- (-2,3.75);
\draw [dashed,orange] (0,3) -- (-2,3.75);
\draw [dashed,orange] (2,4.5) -- (4,5.25);
\draw [dashed,orange] (2,6) -- (4,5.25);
\draw [dashed,orange] (2,0) -- (4,0);
\draw [dashed,orange] (0,1.5) -- (4,0);
\draw [dashed,orange] (2,7.5) -- (4,7.5);
\draw [dashed,orange] (-2,0) -- (0,0);

\draw [red] (7,2.5) -- (9,2.5);
\draw [dashed, orange] (7,1) -- (9,1);

\draw (10.5,2.5) node[anchor=east,text width=1cm] {$M_0$};
\draw (10.5,1) node[anchor=east,text width=1cm] {$S$};
\draw (1,0.3) node[anchor=east,text width=1cm] {$u_1$};
\draw (3,0.3) node[anchor=east,text width=1cm] {$v_1$};
\draw (1,1.8) node[anchor=east,text width=1cm] {$u_2$};
\draw (3,1.8) node[anchor=east,text width=1cm] {$v_2$};
\draw (1,3.3) node[anchor=east,text width=1cm] {$u_3$};
\draw (3,3.3) node[anchor=east,text width=1cm] {$v_3$};
\draw (1,4.8) node[anchor=east,text width=1cm] {$u_4$};
\draw (3,4.8) node[anchor=east,,text width=1cm] {$v_4$};
\draw (1,6.3) node[anchor=east,text width=1cm] {$u_5$};
\draw (3,6.3) node[anchor=east,text width=1cm] {$v_5$};
\draw (1,7.8) node[anchor=east,text width=1cm] {$u_6$};
\draw (3,7.8) node[anchor=east,text width=1cm] {$v_6$};

\fill [color=black] (0,0) circle (3pt); 
\fill [color=black] (2,0) circle (3pt); 
\fill [color=black] (0,1.5) circle (3pt);
\fill [color=black] (2,1.5) circle (3pt);
\fill [color=black] (0,3) circle (3pt);
\fill [color=black] (2,3) circle (3pt);
\fill [color=black] (0,4.5) circle (3pt);
\fill [color=black] (2,4.5) circle (3pt);
\fill [color=black] (0,6) circle (3pt);
\fill [color=black] (2,6) circle (3pt);
\fill [color=black] (0,7.5) circle (3pt);
\fill [color=black] (2,7.5) circle (3pt);

\fill [color=black] (-2,6.75) circle (3pt);
\fill [color=black] (-2,3.75) circle (3pt);
\fill [color=black] (-2,0) circle (3pt);
\fill [color=black] (4,5.25) circle (3pt);
\fill [color=black] (4,0) circle (3pt);
\fill [color=black] (4,7.5) circle (3pt);
\end{tikzpicture}
\caption{Example showing $M_0$ and $S$ at the end of the second pass of
  \Cref{alg:2ptf} with $\lambda = 2$.  When we greedily augment $M_0$ after the
  second pass, we may choose to augment $u_5v_5$ and lose two possible
  augmentations of edges $u_4v_4$ and $u_6v_6$.} \label{fig:2ptfex}
\end{figure}

\begin{theorem}
  \label{thm:2ptf}
  \Cref{alg:2ptf} is a two-pass, semi-streaming,
  $(1/2 +1/20)$-approximation algorithm for maximum matching in
  triangle-free graphs.
\end{theorem}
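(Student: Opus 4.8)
The plan is to control the quality of the first-pass matching $M_0$ through a single parameter $\advntg \ge 0$ defined by $|M_0| = (1/2 + \advntg)|M^*|$, so that \Cref{lem:kmmlem1} immediately supplies a lower bound of $(1/2 - 3\advntg)|M^*|$ on the number $a$ of $3$-augmentable edges of $M_0$. Each such edge $u_iv_i$ lies on a $3$-augmenting path $y_1 u_i v_i y_2$ of $M_0\cup M^*$, so both endpoints $u_i$ and $v_i$ have their $M^*$-partner ($y_1$, resp.\ $y_2$) in $V\setminus V(M_0)$; hence both are legitimate $X$-endpoints for the edges examined by $\textproc{Semi}(\lambda, V(M_0), V\setminus V(M_0))$. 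The goal is to show that a constant fraction of these $a$ edges actually get augmented.

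The first key point --- and the only place triangle-freeness is used --- is that whenever both $u_i$ and $v_i$ are matched by $S$, they must be matched to two \emph{distinct} vertices of $V\setminus V(M_0)$: if $S$ matched both to a common $y$, then $u_iy, v_iy \in S \subseteq E$ together with $u_iv_i \in M_0 \subseteq E$ would form a triangle. Thus every $3$-augmentable edge with both endpoints $S$-matched yields a genuine length-$3$ augmenting path. Call such edges good and write $p$ for their number; since $p \ge a - w$, where $w$ is the number of augmentable endpoints left unmatched by $S$, it suffices to bound $w$. If an augmentable endpoint $x$ is unmatched by $S$, then when its edge to its $M^*$-partner $\mu(x)\in V\setminus V(M_0)$ was scanned we had $\deg_S(x)=0$, so the edge was rejected only because $\deg_S(\mu(x)) = \lambda$; hence $\mu(x)$ is saturated. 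Because $x\mapsto \mu(x)$ is injective (distinct $M^*$-edges), these give $w$ distinct saturated vertices, each accounting for $\lambda$ matched $X$-vertices, and these $S$-neighborhoods are pairwise disjoint since every $X$-vertex carries at most one $S$-edge. Therefore $\lambda w \le |S|$, and since $|S|$ equals the number of matched $X$-vertices while the $w$ starved endpoints are themselves in $V(M_0)$, we get $|S| \le |V(M_0)| - w = 2|M_0| - w$. Combining, $(\lambda+1)w \le 2|M_0|$, so $p \ge a - 2|M_0|/(\lambda + 1)$.

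The second point is to account for the greedy augmentation, which can waste good paths that share an unmatched vertex. I will model the good paths as a multigraph $H$ on $V\setminus V(M_0)$, with one edge $\{y_1,y_2\}$ per good path; since each unmatched vertex $y$ lies on at most $\deg_S(y)\le\lambda$ good paths, $H$ has maximum degree at most $\lambda$. The greedy augmentation produces a maximal matching of $H$, and a maximal matching in a multigraph of maximum degree $\lambda$ has at least $|E(H)|/(2\lambda - 1)$ edges; hence the number of augmentations is at least $p/(2\lambda - 1)$, giving $|M| \ge |M_0| + p/(2\lambda - 1)$.

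It remains to choose $\lambda$ and take the worst case over $\advntg$. Setting $\lambda = 3$ gives $|M| \ge |M_0| + (a - |M_0|/2)/5$; substituting $|M_0| = (1/2+\advntg)|M^*|$ and $a \ge (1/2 - 3\advntg)|M^*|$ and simplifying yields $|M| \ge (1/2 + 1/20 + 3\advntg/10)|M^*|$, which together with the trivial bound $|M|\ge|M_0|$ (covering large $\advntg$, where the augmentation estimate becomes vacuous) is minimized at $\advntg = 0$ and gives the claimed $(1/2 + 1/20)$-approximation. I expect the main obstacle to be the lower bound on $p$: one must argue simultaneously that few augmentable endpoints are starved by $S$ (the saturation/capacity count above) and that the greedy selection does not then discard too many good paths, and it is the interaction of these two losses --- through the denominators $\lambda + 1$ and $2\lambda - 1$ --- that pins down the optimal $\lambda = 3$.
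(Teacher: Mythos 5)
Your proof is correct, and its load-bearing ideas coincide with the paper's own proof of \Cref{thm:2ptf}: the saturation argument showing that each starved endpoint forces $\lambda$ distinct $S$-edges, giving the $2|M_0|/(\lambda+1)$ bound (this is exactly the paper's bound~\eqref{eq:4} on $|\Vnc|$); triangle-freeness to make the two $S$-partners of a doubly covered $M_0$-edge distinct; the $1/(2\lambda-1)$ greedy loss; and the choice $\lambda=3$. Where you differ is only in the counting device: you work with $3$-augmentable edges via \Cref{lem:kmmlem1} and charge each non-augmented one to a starved endpoint, while the paper defines good vertices on augmenting paths of \emph{all} lengths ($|V_G| = 2\sum_i k_i$), uses the exact identity $\sum_i k_i = |M^*| - |M_0|$, and extracts doubly covered edges by pigeonhole ($|V(M_0)\cap V(S)| - |M_0|$). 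The two routes are quantitatively identical---both reduce to $|M| \ge (11/20 + 3\advntg/10)\,|M^*|$---so neither buys anything over the other here; yours is more modular, the paper's mirrors its analysis of \Cref{thm:3pkmmimpimp}. One imprecision you should repair: the greedy post-processing in \Cref{alg:2ptf} is not restricted to your good paths, since it may also augment doubly covered $M_0$-edges that are not $3$-augmentable; hence it produces a maximal matching of a supergraph $H' \supseteq H$ (this $H'$ is in effect the paper's candidate set), not of $H$. The bound survives: every edge of $H$ is dominated by the matching produced, a matching edge lying in $H$ dominates at most $2\lambda-1$ edges of $H$ (itself plus $2\lambda-2$ neighbors), and a matching edge in $H' \setminus H$ dominates at most $2\lambda-2$ edges of $H$ (its two $S$-edges lead to non-good $M_0$-edges), so the number of augmentations is still at least $p/(2\lambda-1)$.
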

\begin{proof}
  As in the proof of~\Cref{thm:3pkmmimpimp}, let $M^*$ be a maximum matching
  such that all nonaugmenting connected components of $M_0 \cup M^*$ are single
  edges.  For $i = \{3,5,7,\ldots\}$, let $k_i$ denote the number of
  $i$-augmenting paths in $M_0 \cup M^*$, and let $k$ denote the number of edges
  in $M^* \cap M_0$.

  Now, we define \emph{good} vertices.  Consider an $i$-augmenting path
  $x_1y_1x_2y_2x_3\cdots x_{(i+1)/2}y_{(i+1)/2}$ in $M_0 \cup M^*$.  We call the
  vertices $y_1\in V(M_0)$ and $x_{(i+1)/2}\in V(M_0)$ good vertices, because
  the edges $x_1y_1 \in M^*$ and $x_{(i+1)/2}y_{(i+1)/2} \in M^*$ can
  potentially be added to $S$ by our algorithm.  Denote by $V_G$ the set of good
  vertices and by $V_B := V(M_0)\setminus V_G$ the set of bad vertices.  Then
  $|V_G| = 2\sum_i k_i$.  Note that $V_G\cap V_B =\emptyset$ and
  $V_G\cup V_B = V(M_0)$ by definition.

  Let $\Vnc := V_G\setminus V(S)$ be the set of good vertices \emph{not covered}
  by $S$.  An edge $uv \in M^*$ with $u\in V\setminus V(M_0)$ and $v\in \Vnc$
  was not added to $S$, because $\deg_S(u) = \lambda$. Hence
  \begin{equation}
    \label{eq:4}
    \lambda |\Vnc| \le |V(M_0)| - |\Vnc|\,\;\;\;\text{ i.e., }\;\;\; |\Vnc| \le
    \frac{2}{\lambda+1}|M_0|\,, 
  \end{equation}
  because at most $|V(M_0)| - |\Vnc|$ vertices in $V(M_0)$ are covered by $S$.
  Now,
  \begin{align*}
    |V(M_0)\setminus V(S)|
    &= |V_G\setminus V(S)| + |V_B\setminus V(S)|
    &&\because \text{ $V_G\cap V_B =\emptyset$ and $V_G\cup V_B = V(M_0)$,}\\ 
    &\le|\Vnc| + |V_B|
    &&\because \text{ $\Vnc = V_G\setminus V(S)$ and $|V_B\setminus V(S)| \le
       |V_B|$,}\\
    &\le\frac{2}{\lambda+1}|M_0| + |V(M_0)| - |V_G|
    &&\text{by~\eqref{eq:4} and the fact $|V_B| = |V(M_0)| - |V_G|$,}\\
    &=\frac{2}{\lambda+1}|M_0| + |V(M_0)| - 2\sum_i k_i
    &&\text{because $|V_G| = 2\sum_i k_i$}\,.
  \end{align*}
  Using $|V(M_0)| = |V(M_0)\setminus V(S)| + |V(M_0)\cap V(S)|$ and the above,
  we get
  \begin{align}
    |V(M_0)\cap V(S)|
    &\ge |V(M_0)| - \left( \frac{2}{\lambda+1}|M_0| + |V(M_0)| - 2\sum_i k_i
      \right)\nonumber\\
    &=2\left(\sum_i k_i - \frac{1}{\lambda+1}|M_0|\right)\,.\label{eq:4a}
  \end{align}
  We observe that at most $|M_0|$ vertices in $V(M_0)$ (one endpoint of each
  edge) can be covered by $S$ without having both endpoints of an edge in $M_0$
  covered.  Hence, at least $|V(M_0)\cap V(S)| - |M_0|$ edges in $M_0$ have both
  their endpoints covered by $S$, which, by~\eqref{eq:4a}, is at least
  \begin{equation}
    \label{eq:5}
    2\left(\sum_i k_i - \frac{1}{\lambda+1}|M_0|\right) - |M_0|
    =2\sum_i k_i - \frac{\lambda+3}{\lambda+1}|M_0|\,.
  \end{equation}
  After the second pass, when we greedily augment an edge from the above edges,
  i.e., edges whose both endpoints are covered by $S$, we may potentially lose
  $2(\lambda-1)$ other augmentations (see~\Cref{fig:2ptfex}).  To see this,
  consider $uv\in M_0$ such that $u,v \in V(S)$ and $au \in S$ and $vb \in S$.
  The graph is triangle free, so we know that $a \neq b$, and we \emph{can}
  augment $M_0$ using the $3$-augmenting path $auvb$; but we may lose at most
  $\lambda-1$ edges incident to $a$ in $S$ and at most $\lambda-1$ edges
  incident to $b$ in $S$.  Therefore the number of augmentations $c$ we get
  after the second pass is at least $1/(2\lambda-1)$ times the right hand side
  of~\eqref{eq:5}, i.e.,
  \[
    c \ge \frac{2}{2\lambda-1}\sum_i k_i -
    \frac{\lambda+3}{(2\lambda-1)(\lambda+1)}|M_0|\,.
  \]
  So the output size $|M| = |M_0| + c$, and using the above bound on $c$ and
  simplifying we get:
  \[
    |M| \ge \frac{2}{2\lambda-1}\sum_i k_i +
    \frac{2(\lambda^2-2)}{(2\lambda-1)(\lambda+1)}|M_0|\,;
  \]
  substituting $\sum_i k_i = |M^*|-|M_0|$, by~\eqref{eq:m0mstarval}, in the
  above,
  \[
    |M| \ge  \frac{2}{2\lambda-1}|M^*| +
    \frac{2(\lambda^2 -\lambda-3)}{(2\lambda-1)(\lambda+1)}|M_0|\,.
  \]
  Using $\lambda=3$ and the fact that $M_0$ is $2$-approximate, we get
  \[
    |M| \ge \frac{2}{5}|M^*| + \frac{3}{10}|M_0|
    \ge \frac{2}{5}|M^*| + \frac{3}{20}|M^*|
    = \frac{11}{20}|M^*|
    =\left(\frac{1}{2}+ \frac{1}{20}\right)|M^*|\,.\qedhere
  \]    
\end{proof}


\section{Improved Two Pass Algorithm}\label{sec:2pimp2}
We present our main result that is a two pass algorithm in this section.  In the
first pass, we find a maximal matching $M_0$.  In the second pass, we maintain a
set $S$ of support edges $xy$, such that $x\in V\setminus V(M_0)$,
$y\in V(M_0)$, and $\deg_S(y)\leq \lambdaM$ and $\deg_S(x)\leq \lambdaU$, where
$\lambdaM \ge 1$ and $\lambdaU \ge 1$ are parameters denoting maximum degree
allowed in $S$ for matched and unmatched vertices (with respect to $M_0$),
respectively.  Whenever a new edge forms a $3$-augmenting path with an edge in
$M_0$ and an edge in $S$, we augment.  \Cref{alg:2pimp2} gives a formal
description.
\begin{algorithm}[!ht]
  \caption{Improved two-pass algorithm\label{alg:2pimp2}: input graph $G$}
  \begin{algorithmic}[1]
    \State In the first pass, find a maximal matching $M_0$.
    \If{$G$ is triangle-free}
    \State Return \textproc{Improve-Matching}$(M_0,2,1)$
    \Else
    \State Return \textproc{Improve-Matching}$(M_0,4,2)$
    \EndIf

    \Function{Improve-Matching}{$M_0, \lambdaU, \lambdaM$}
    \label{alg:2pimp2:fimpm}
    \State  $M\gets M_0$, $S\gets\emptyset$, $I\gets \emptyset$ and
    $I_B\gets \emptyset$
    \ForAll{edge $xy$ in the stream} \label{line:alg:2pimp2:4}
    \If{$x \text{ or } y\in I \cup I_B$}\label{line:9}
    \State Continue, i.e., ignore $xy$.
    \ElsIf{$x\in V(M_0)$ and $y\in V(M_0)$}
    \State Continue, i.e., ignore $xy$.
    \ElsIf{there exist $v$ and $b$ such that $yv \in M_0$ and $vb \in S$}
    \State
    $M\gets M\setminus \{yv\} \cup \{xy,vb\}$ \label{line:c3aug}
    \Comment{a $3$-augmentation}
    \Statex \hskip\algorithmicindent \hskip\algorithmicindent
    \hskip\algorithmicindent 
    Let
    $I_x\gets\{u_x\,,v_x: xu_x\in S \text{ and } u_xv_x\in M_0 \}$.
    \Statex \hskip\algorithmicindent \hskip\algorithmicindent
    \hskip\algorithmicindent 
    Let
    $I_b\gets\{u_b\,,v_b: u_bv_b\in M_0 \text{ and } v_bb\in S \}$. 
    \State Then $I\gets I\cup \{x,y,v,b\}$ and $I_B\gets I_B \cup I_x \cup I_b$.
    \Else
    \Statex \hskip\algorithmicindent \hskip\algorithmicindent
    \hskip\algorithmicindent 
    Without loss of generality, assume that $x\in V\setminus V(M_0)$ and $y\in
    V(M_0)$.
    \If{$\deg_S(x) <\lambdaU$ and $\deg_S(y)<\lambdaM$}
    \Comment{See~\Cref{fig:2pimp2}.}
    \State $S\gets S\cup \{xy\}$
    \Comment{\textbf{Note:} Once an edge is added to $S$, it is never removed
      from it.}
    \EndIf
    \EndIf
    \EndFor
    \State Return $M$.
    \EndFunction
  \end{algorithmic}
\end{algorithm}
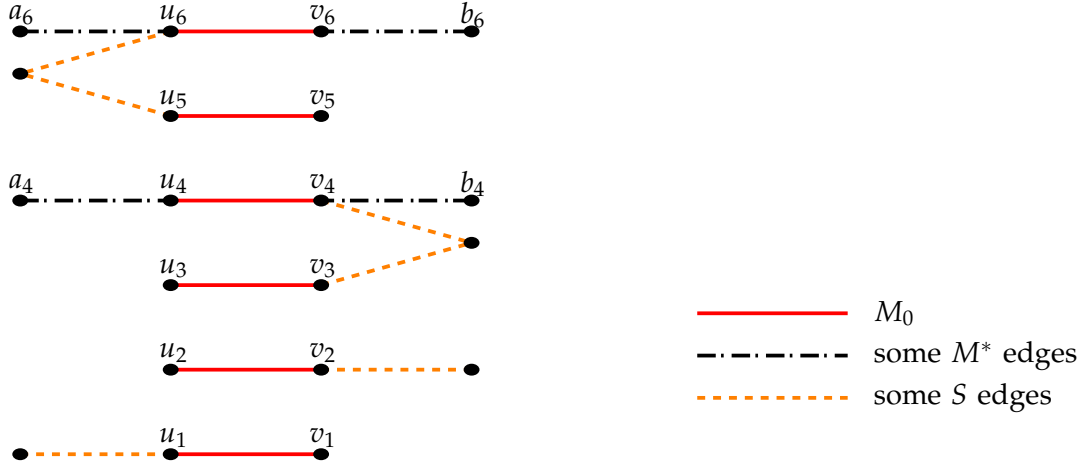
\begin{figure}
\centering
\begin{tikzpicture}[yscale=0.75,line width=0.5mm]
\draw [red] (0,0) -- (2,0);
\draw [red] (0,1.5) -- (2,1.5);
\draw [red] (0,3) -- (2,3);
\draw [red] (0,4.5) -- (2,4.5);
\draw [red] (0,6) -- (2,6);
\draw [red] (0,7.5) -- (2,7.5);

\draw [dashed,orange] (0,7.5) -- (-2,6.75);
\draw [dashed,orange] (0,6) -- (-2,6.75);
\draw [dashed,orange] (2,4.5) -- (4,3.75);
\draw [dashed,orange] (2,3) -- (4,3.75);
\draw [dashed,orange] (-2,0) -- (0,0);
\draw [dashed,orange] (2,1.5) -- (4,1.5);

\draw [dash pattern={on 7pt off 2pt on 1pt off 3pt}] (-2,7.5) -- (0,7.5);
\draw [dash pattern={on 7pt off 2pt on 1pt off 3pt}] (4,7.5) -- (2,7.5);
\draw [dash pattern={on 7pt off 2pt on 1pt off 3pt}] (-2,4.5) -- (0,4.5);
\draw [dash pattern={on 7pt off 2pt on 1pt off 3pt}] (4,4.5) -- (2,4.5);

\draw [red] (7,2.5) -- (9,2.5);
\draw [dash pattern={on 7pt off 2pt on 1pt off 3pt}] (7,1.75) -- (9,1.75);
\draw [dashed, orange] (7,1) -- (9,1);

\draw (10.5,2.5) node[anchor=east,text width=1cm] {$M_0$};
\draw (12.5,1.75) node[anchor=east,text width=3cm] {some $M^*$ edges};

\draw (12.5,1) node[anchor=east,text width=3cm] {some $S$ edges};
\draw (1,0.3) node[anchor=east,text width=1cm] {$u_1$};
\draw (3,0.3) node[anchor=east,text width=1cm] {$v_1$};
\draw (1,1.8) node[anchor=east,text width=1cm] {$u_2$};
\draw (3,1.8) node[anchor=east,text width=1cm] {$v_2$};
\draw (1,3.3) node[anchor=east,text width=1cm] {$u_3$};
\draw (3,3.3) node[anchor=east,text width=1cm] {$v_3$};
\draw (1,4.8) node[anchor=east,text width=1cm] {$u_4$};
\draw (3,4.8) node[anchor=east,,text width=1cm] {$v_4$};
\draw (1,6.3) node[anchor=east,text width=1cm] {$u_5$};
\draw (3,6.3) node[anchor=east,text width=1cm] {$v_5$};
\draw (1,7.8) node[anchor=east,text width=1cm] {$u_6$};
\draw (3,7.8) node[anchor=east,text width=1cm] {$v_6$};
\draw (-1,7.8) node[anchor=east,text width=1cm] {$a_6$};
\draw (5,7.8) node[anchor=east,text width=1cm] {$b_6$};
\draw (-1,4.8) node[anchor=east,text width=1cm] {$a_4$};
\draw (5,4.8) node[anchor=east,text width=1cm] {$b_4$};

\fill [color=black] (0,0) circle (3pt); 
\fill [color=black] (2,0) circle (3pt); 
\fill [color=black] (0,1.5) circle (3pt);
\fill [color=black] (2,1.5) circle (3pt);
\fill [color=black] (0,3) circle (3pt);
\fill [color=black] (2,3) circle (3pt);
\fill [color=black] (0,4.5) circle (3pt);
\fill [color=black] (2,4.5) circle (3pt);
\fill [color=black] (0,6) circle (3pt);
\fill [color=black] (2,6) circle (3pt);
\fill [color=black] (0,7.5) circle (3pt);
\fill [color=black] (2,7.5) circle (3pt);

\fill [color=black] (-2,7.5) circle (3pt);
\fill [color=black] (4,7.5) circle (3pt);
\fill [color=black] (-2,4.5) circle (3pt);
\fill [color=black] (4,4.5) circle (3pt);
\fill [color=black] (-2,6.75) circle (3pt);
\fill [color=black] (4,3.75) circle (3pt);
\fill [color=black] (4,1.5) circle (3pt);
\fill [color=black] (-2,0) circle (3pt);
\end{tikzpicture}
\caption{Example showing $M_0$ and some of the edges in $M^*$ and $S$ during the
  second pass of \Cref{alg:2pimp2} for triangle-free graphs with $\lambdaU = 2$
  and $\lambdaM = 1$.  At most one of $u_i$ and $v_i$ can have positive degree
  in $S$, because we would rather augment $u_iv_i$ instead of adding the latter
  edge to $S$.  By our convention, $a_4u_4$ arrived before $v_4b_4$, and
  $a_6u_6$ arrived before $v_6b_6$. Since $a_4u_4$ was not added to $S$, we have
  $\deg_S(a_4) = \lambdaU$ ($S$ edges incident to $a_4$ are not
  shown).} \label{fig:2pimp2}
\end{figure}

\subsection*{Setting up a charging scheme to lower bound the number of augmentations}
We first lay the groundwork and give a charging scheme.
\begin{note}\label{note1:2pimp2}
  For general graphs (that are possibly not triangle-free), we need to set
  $\lambdaM \ge 2$.
\end{note}
To see why, suppose $\lambdaM=1$.  Let $uv$ be a $3$-augmentable edge in
$M_0$. Then, for the edge $uv$, we might end up storing the edges $ub$ and $vb$
in $S$, and the edge $uv$ would not get augmented.  If $\lambdaM\geq 2$, and we
store at least $\lambdaM$ edges incident to $u$, then an edge incident to $v$
will not form a triangle with at least one of those and $uv$ would get
augmented.  So, for general graphs, we need to set $\lambdaM \ge 2$.

Let $|M_0| = (1/2 + \advntg) |M^*|$.  For a $3$-augmentable edge $uv \in M_0$,
let $auvb$ be the $3$-augmenting path such that $au,vb \in M^*$.  Without loss
of generality, assume that $au$ arrived before $vb$.  Then we make the following
observation.
\begin{note}
  \label{note:aunt}
  When $au$ arrived, it may not be added to $S$ for one of the following
  reasons:
  \begin{itemize}
    \item The vertex $a$ was already matched.
    \item There were $\lambdaM$ edges incident to $u$ in $S$.
    \item There were $\lambdaU$ edges incident to $a$ in $S$.
  \end{itemize}
\end{note}
We call some edges in $M_0$ \emph{good}, some \emph{partially} good, and some
\emph{bad}.  An edge is good if it got augmented.  An edge $uv \in M_0$ is bad
if it is $3$-augmentable, not good, and vertex $a$ or $b$ had $\lambdaU$ edges
incident to them in $S$ when edge $au$ or $vb$ arrived.  An edge $uv \in M_0$ is
partially good if it is $3$-augmentable, but neither good nor bad (``partially''
good because, as we will see later, we can hold some good edge $u'v'\in M_0$
responsible for $uv$ not getting augmented).  Note that all $3$-augmentable
edges get some label according to our labeling.  We require the following Lemma
to describe the charging scheme.
\begin{lem}\label{lem2:2pimp2}
  Suppose $au$ was not added to $S$ because there were already $\lambdaM$ edges
  incident to $u$ in $S$.  If, later, $uv$ did not get augmented when $vb$
  arrived, then
  \begin{itemize}
    \item $b$ was already matched via augmenting path $a''u''v''b$, or
    \item there exists $a'u\in S$ and $u'v' \in M_0$ such that $a'$ was matched
    via augmenting path $a'u'v'b'$.
  \end{itemize}
\end{lem}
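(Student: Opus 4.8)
The plan is to show that, at the instant $vb$ arrives, the \emph{only} thing that can stop $uv$ from being augmented is that $vb$ is discarded by the very first guard of the loop (the $I\cup I_B$ test on line~\ref{line:9}), and then to trace which earlier $3$-augmentation is responsible for placing $v$ or $b$ into $I\cup I_B$. That responsible augmentation will supply the matched vertex demanded by one of the two alternatives.

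First I would note that, by hypothesis, $u$ holds $\lambdaM$ support edges when $au$ arrives, and since edges are never deleted from $S$, $u$ still has at least $\lambdaM\ge 1$ support edges when $vb$ arrives. Hence some $a'u\in S$ exists, the $3$-augmenting path $b\,v\,u\,a'$ is present, and reading $vb$ would trigger a $3$-augmentation of $uv$ \emph{unless} $vb$ is dropped first. The ``both endpoints matched'' guard cannot fire, since $b\in V\setminus V(M_0)$; so $vb$ must be caught by the first guard, giving $v\in I\cup I_B$ or $b\in I\cup I_B$. Now every vertex ever placed in $I_B$ is an endpoint of an $M_0$-edge (each of $I_x,I_b$ consists of such endpoints), so $I_B\subseteq V(M_0)$; combined with $b\in V\setminus V(M_0)$ this forces $b\notin I_B$. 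Thus if $b$ is blocked then $b\in I$, meaning $b$ was matched as an endpoint of an earlier realized augmentation — exactly the first alternative.

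It then remains to treat $v\in I\cup I_B$ and extract the second alternative. If $v\in I$, the augmentation that inserted $v$ used the $M_0$-edge through $v$, namely $uv$ itself, so $uv$ was augmented earlier; of the two orientations of $uv$ in that augmentation, the ``good'' one matches $u$ to some vertex $\delta$ via a support edge $u\delta\in S$, yielding the second alternative with $a'=\delta$. If instead $v\in I_B$ (and $v\notin I$, so $uv$ is never augmented), then $v$ entered $I_B$ through some augmentation's $I_x$ or $I_b$, either as a support-neighbor or as the $M_0$-partner of one; in the partner case that support-neighbor is necessarily $u$, so $u$ owns a support edge $a'u\in S$ whose other endpoint $a'$ was matched by that augmentation — again the second alternative.

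The crux, and the step I expect to be the main obstacle, is ruling out the ``wrong'' orientations in the last paragraph, namely those in which it is $v$ rather than $u$ that carries the relevant support edge. For this I would establish the invariant foreshadowed by Note~\ref{note1:2pimp2}: \emph{while $uv$ is unaugmented and $u$ already holds $\lambdaM$ support edges, $v$ has no support edge at all.} The reason is that any edge incident to $v$ arriving at such a moment would complete a genuine $3$-augmenting path $\cdot\,v\,u\,a'$ with one of $u$'s support edges and would be augmented rather than stored in $S$; triangle-freeness makes the path genuine when $\lambdaM=1$, and for general graphs $\lambdaM\ge 2$ guarantees that at least one of $u$'s support edges avoids forming a triangle. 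Invoking this invariant at the moment $vb$ arrives, and using that $S$ only grows, shows $v$ never had a support edge, which deletes precisely the bad orientations and leaves exactly the cases that produce the two alternatives. The delicate points will be the accompanying timing arguments — checking that $u$ genuinely holds its $\lambdaM$ support edges at each instant where the invariant is invoked, which I would derive from the fact that $au$ reached the ``add to $S$'' branch (so $u\notin I\cup I_B$ at that time) together with the non-removal of support edges.
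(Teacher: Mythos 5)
Your overall architecture is the same as the paper's, just worked out at the level of the algorithm's guard sets: either $b$ was already matched when $vb$ arrived (first alternative), or the blame must land on some matched $a'\in N_S(u)$ (second alternative). The paper's own proof is a terser version of exactly this --- it argues that if $b$ were unmatched and no $a'\in N_S(u)\setminus\{b\}$ were matched, the algorithm would have augmented $uv$ when $vb$ arrived, with $N_S(u)\setminus\{b\}\neq\emptyset$ supplied by triangle-freeness when $\lambdaM=1$ and by $\lambdaM\ge 2$ in general. Your refinements ($I_B\subseteq V(M_0)$, hence $b\notin I_B$; the classification of how $v$ can land in $I\cup I_B$) are correct and make explicit what the paper leaves implicit.

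The gap is in your justification of the invariant that you yourself identify as the crux: ``while $uv$ is unaugmented and $u$ already holds $\lambdaM$ support edges, $v$ has no support edge at all.'' Your argument only covers edges incident to $v$ that \emph{arrive after} $u$ has reached degree $\lambdaM$: such an edge would indeed augment (or be discarded by the first guard) rather than be stored. It does not rule out a support edge $vc$ added to $S$ \emph{before} $u$ became full; since edges are never removed from $S$, that edge would still be present at every moment where you invoke the invariant, and your appeal to ``$S$ only grows'' runs in the wrong direction --- monotonicity propagates emptiness of $N_S(v)$ \emph{backward} in time, but emptiness at the later time is precisely what has not been established. To close the hole you need the symmetric blocking statement: while $N_S(v)\neq\emptyset$ and $uv$ is unaugmented, $u$ cannot accumulate support edges, because an arriving edge $du$ either completes a genuine augmenting path with an edge of $N_S(v)$ or is killed by the guard --- with the one exception, in general graphs, that $u$ and $v$ may both hold a support edge to the \emph{same} vertex (a triangle), which caps $\deg_S(u)$ at $1<2\le\lambdaM$. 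This two-sided argument is exactly the claim $\deg_S(u)+\deg_S(v)\le\lambdaM$ proved inside \Cref{lem3:2pimp2} of the paper. Alternatively, you can extract what you need from the lemma's hypothesis itself: ``$au$ reached the degree test'' means the augmentation guard failed when $au$ arrived, which forces $N_S(v)\subseteq\{a\}$ at that moment, and the residual case $N_S(v)=\{a\}$ in general graphs is then excluded by the same triangle analysis. A smaller omission of the same flavor: where you assert that some $a'u\in S$ makes $b\,v\,u\,a'$ an augmenting path, you still need $a'\neq b$; the triangle-freeness/$\lambdaM\ge2$ remark you state only for the invariant is needed at that step as well.
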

\begin{proof}
  When $au$ arrived, $|N_S(u)| \ge \lambdaM$.  If $b$ was unmatched when $vb$
  arrived, then some $a' \in N_S(u) \setminus \{b\}$ must have been matched,
  otherwise we would have augmented $uv$.  Now for triangle-free graphs
  $b \notin N_S(u)$, so $|N_S(u) \setminus \{b\}| = |N_S(u)| \ge 1$, and for
  general graphs, by Observation~\ref{note1:2pimp2}, $\lambdaM \ge 2$, so
  $|N_S(u) \setminus \{b\}| \ge \lambdaM - 1 \ge 1$.
\end{proof}
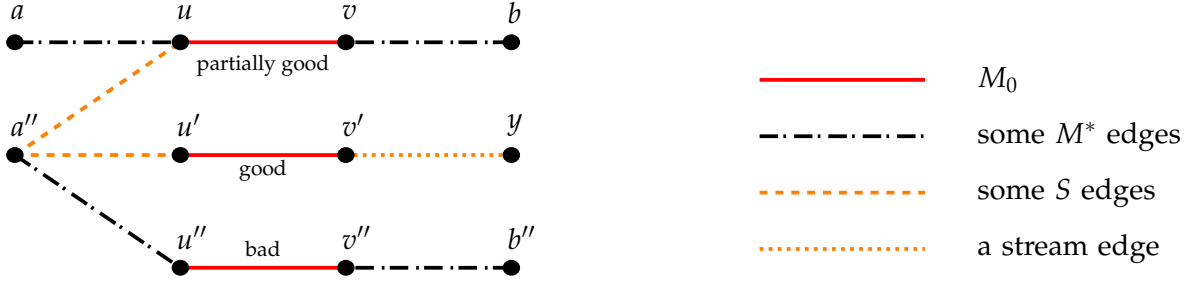
\begin{figure}
\centering
\begin{tikzpicture}[xscale=1.1,line width=0.5mm]
\draw [red] (0,0) -- (2,0);
\draw [red] (0,1.5) -- (2,1.5);
\draw [red] (0,3) -- (2,3);

\draw [dashed,orange] (-2,1.5) -- (0,3);
\draw [dashed,orange] (-2,1.5) -- (0,1.5);

\draw [dash pattern={on 7pt off 2pt on 1pt off 3pt}] (-2,3) -- (0,3);
\draw [dash pattern={on 7pt off 2pt on 1pt off 3pt}] (4,3) -- (2,3);
\draw [dash pattern={on 7pt off 2pt on 1pt off 3pt}] (-2,1.5) -- (0,0);
\draw [dash pattern={on 7pt off 2pt on 1pt off 3pt}] (4,0) -- (2,0);

\draw [dotted,orange] (4,1.5) -- (2,1.5);

\draw [red] (7,2.5) -- (9,2.5);
\draw [dash pattern={on 7pt off 2pt on 1pt off 3pt}] (7,1.75) -- (9,1.75);
\draw [dashed, orange] (7,1) -- (9,1);
\draw [dotted, orange] (7,0.25) -- (9,0.25);

\draw (12.5,2.5) node[anchor=east,text width=3cm] {$M_0$};
\draw (12.5,1.75) node[anchor=east,text width=3cm] {some $M^*$ edges};
\draw (12.5,1) node[anchor=east,text width=3cm] {some $S$ edges};
\draw (12.5,0.25) node[anchor=east,text width=3cm] {a stream edge};

\draw (1,0.4) node[anchor=east,text width=1cm] {$u''$};
\draw (3,0.4) node[anchor=east,text width=1cm] {$v''$};
\draw (5,0.4) node[anchor=east,text width=1cm] {$b''$};
\draw (1,1.9) node[anchor=east,text width=1cm] {$u'$};
\draw (-1,1.9) node[anchor=east,text width=1cm] {$a''$};
\draw (3,1.9) node[anchor=east,text width=1cm] {$v'$};
\draw (5,1.9) node[anchor=east,text width=1cm] {$y$};
\draw (1,3.4) node[anchor=east,text width=1cm] {$u$};
\draw (-1,3.4) node[anchor=east,text width=1cm] {$a$};
\draw (3,3.4) node[anchor=east,text width=1cm] {$v$};
\draw (5,3.4) node[anchor=east,text width=1cm] {$b$};

\draw (1,2.4) node[anchor=south] {\scriptsize partially good};
\draw (1,1) node[anchor=south] {\scriptsize good};
\draw (1,0) node[anchor=south] {\scriptsize bad};

\fill [color=black] (0,0) circle (3pt); 
\fill [color=black] (2,0) circle (3pt); 
\fill [color=black] (0,1.5) circle (3pt); 
\fill [color=black] (2,1.5) circle (3pt); 
\fill [color=black] (0,3) circle (3pt); 
\fill [color=black] (2,3) circle (3pt); 

\fill [color=black] (4,0) circle (3pt);
\fill [color=black] (4,1.5) circle (3pt);
\fill [color=black] (4,3) circle (3pt);
\fill [color=black] (-2,1.5) circle (3pt);
\fill [color=black] (-2,3) circle (3pt);
\end{tikzpicture}
\caption{Example showing a good edge, a bad edge, and a partially good edge.  We
  use parameters $\lambdaU = 2$ and $\lambdaM = 1$, so we are in the
  triangle-free case.  The edge $u'v'$ is not $3$-augmentable but was augmented
  using $a''u'v'y$, so $u'v'$ is a good edge.  The edge $u''v''$ is a
  $3$-augmentable edge that was not augmented and when $a''u''$ arrived,
  $\deg_S(a'') = 2$, so $u''v''$ is a bad edge.  For $uv$, we did not take $au$
  in $S$, because $\deg_S(u) = 1$, so $uv$ is a partially good edge, and we can
  charge $uv$ to $u'v'$ using~\Cref{lem2:2pimp2}.} \label{fig:2pimp2cs}
\end{figure}

\paragraph*{Charging Scheme.} As alluded to earlier, we charge a partially good
edge to some good edge.  Recall that for a $3$-augmentable edge $uv \in M_0$, we
denote by $au,vb \in M^*$ the edges that form the $3$-augmenting path with $uv$
such that $au$ arrived before $vb$.  We use Observation~\ref{note:aunt} and
consider the following cases. See~\Cref{fig:2pimp2cs}.
\begin{itemize}
  \item Suppose $au$ was not added to $S$ because $a$ was already matched.
  Then, let $u'v' \in M_0$ was augmented using $au'v'b'$.  If
  $\deg_S(a) \le \lambdaU - 1$, then we charge $uv$ to $u'v'$.  Otherwise, $uv$
  is bad.
  \item Suppose $au$ was not added to $S$ because $\deg_S(u) = \lambdaM$.  Then
  we use~\Cref{lem2:2pimp2}.  We either charge $uv$ to $u'v'$, or if
  $\deg_S(b) \le \lambdaU - 1$, then we charge $uv$ to $u''v''$.  Otherwise,
  $uv$ is bad.
  \item Suppose $au$ was not added to $S$ because $\deg_S(a)=\lambdaU$, then
  $uv$ is bad.
  \item Otherwise, $au$ was added to $S$, but $uv$ did not get augmented when $vb$
  arrived. Then:
  \begin{itemize}
    \item Either there exists $a' \in N_S(u)$ that was matched via augmenting
    path $a'u'v'b'$ (note that $a'$ may be same as $a$), then we charge $uv$ to
    $u'v'$;
    \item or $b$ was already matched via augmenting path
    $a''u''v''b$, and $vb$ was ignored; in this case, if
    $\deg_S(b) \le \lambdaU - 1$, then we charge $uv$ to $u''v''$, otherwise,
    $uv$ is bad.
  \end{itemize}
\end{itemize}
We now bound the number of bad edges in $M_0$ from above.
\begin{lem}\label{lem3:2pimp2}
  The number of bad edges is at most $\lambdaM|M_0|/\lambdaU$.
\end{lem}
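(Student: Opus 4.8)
The plan is to injectively charge each bad edge to a distinct \emph{saturated} vertex of $V\setminus V(M_0)$ (a vertex with $\deg_S=\lambdaU$), and then to bound the number of such saturated vertices by bounding $|S|$ from above. Recall that a bad edge $uv\in M_0$ is $3$-augmentable with $au,vb\in M^*$, and by definition one of its $M^*$-neighbors---$a$ or $b$---already had $\lambdaU$ incident edges in $S$ at the moment $au$ (resp.\ $vb$) arrived; call that neighbor the \emph{witness} $w(uv)$. Since $S$ only grows and $\deg_S$ never exceeds $\lambdaU$, every witness satisfies $\deg_S(w(uv))=\lambdaU$ at the end of the pass. Crucially, distinct bad edges receive distinct witnesses: bad edges are vertex-disjoint (they all lie in the matching $M_0$), so their endpoints have pairwise-distinct $M^*$-partners, and $w(uv)$ is always the $M^*$-partner of an endpoint of $uv$. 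Since each edge of $S$ has exactly one endpoint in $V\setminus V(M_0)$, the $\lambdaU$ support edges at different witnesses are disjoint, so
\[
  \lambdaU\cdot(\text{number of bad edges}) \;=\; \sum_{\text{witnesses }w}\deg_S(w)\;\le\;|S|.
\]

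It remains to show $|S|\le\lambdaM|M_0|$. Writing $|S|=\sum_{uv\in M_0}\bigl(\deg_S(u)+\deg_S(v)\bigr)$ (valid because every support edge has a unique endpoint in $V(M_0)$), this reduces to the key invariant that \emph{for every edge $uv\in M_0$ we have $\deg_S(u)+\deg_S(v)\le\lambdaM$ throughout the second pass}. I would prove this invariant by induction on the stream, examining the only event that changes these degrees: the addition of a support edge, say $au$ with $a\in V\setminus V(M_0)$ (the symmetric case is identical). The algorithm adds $au$ only when it does \emph{not} augment, i.e.\ only when $v$ has no $S$-neighbor other than $a$; hence at that instant either $\deg_S(v)=0$, or $\deg_S(v)=1$ with $v$'s unique neighbor equal to $a$ (a triangle $uva$, possible only in the general case where $\lambdaM=2$). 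A short case check then shows the reachable pairs $(\deg_S(u),\deg_S(v))$ are exactly those with sum at most $\lambdaM$: in the triangle-free case ($\lambdaM=1$) covering one endpoint forces the other to stay uncovered, while in the general case ($\lambdaM=2$) the only way both endpoints become covered is the symmetric triangle state $(1,1)$ with a shared neighbor, which is then frozen, because any further support edge on $u$ or $v$ would have a distinct second neighbor and thus trigger an augmentation. Once $uv$ is augmented, both endpoints enter $I$ and receive no further support edges, so the frozen value persists.

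This invariant is the heart of the argument and the main obstacle, since it is precisely where the distinction between $\lambdaM=1$ (triangle-free) and $\lambdaM=2$ (general, forced by Observation~\ref{note1:2pimp2}) enters: the case analysis must rule out the ``bad'' triangle configurations that would otherwise let the per-edge support-degree sum exceed $\lambdaM$. Granting it, we get $|S|\le\lambdaM|M_0|$, and combining with the displayed inequality shows the number of bad edges is at most $|S|/\lambdaU\le\lambdaM|M_0|/\lambdaU$, as claimed.
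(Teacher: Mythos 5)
Your proof is correct and takes essentially the same route as the paper's: it bounds the number of bad edges by $|S|/\lambdaU$ by charging each bad edge to the $\lambdaU$ support edges at its saturated $M^*$-neighbor (the paper phrases this as each $S$-edge being ``responsible'' for at most one bad edge, which is the same injectivity via $M^*$ being a matching), and then establishes $|S|\le\lambdaM|M_0|$ through the identical key claim that $\deg_S(u)+\deg_S(v)\le\lambdaM$ for every $uv\in M_0$. The only cosmetic difference is that you prove this claim by forward induction over reachable degree states (including the frozen triangle state $(1,1)$ in the general case), whereas the paper argues by contradiction on the $(\lambdaM+1)$th support edge added at an $M_0$-edge; the underlying case analysis is the same.
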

\begin{proof}
  We claim that for any $uv \in M_0$, $\deg_S(u) + \deg_S(v) \le \lambdaM$,
  hence $|S| \le \lambdaM |M_0|$.  A short argument is that the
  $(\lambdaM + 1)$th edge would cause an augmentation and will not be added to
  $S$.  Let us assume the claim. By the definition of a bad edge, $\lambdaU$
  edges in $S$ are ``responsible'' for one bad edge in $M_0$.  Also, an edge
  $au'$ (or $v''b$, resp.) in $S$ can be responsible for at most one bad edge
  that can only be $uv$ if $au \notin S$ (or if $vb \notin S$, resp.;
  considering the $3$-augmenting path $auvb$).  Hence, the
  total number of bad edges is at most
  $|S|/\lambdaU \le \lambdaM|M_0|/\lambdaU$.  Now we prove the claim.

  We first prove the claim for triangle-free graphs by contradiction.  Let
  $\deg_S(u) + \deg_S(v) > \lambdaM$, and let $vy \in S$ be the
  $(\lambdaM + 1)$th edge incident to one of $u$ and $v$ that was added to $S$.
  Since $\lambdaM \ge 1$ and $\deg_S(v) \le \lambdaM$, we have
  $\deg_S(u) \ge 1$, i.e. $N_S(u) \neq \emptyset$.  Now when $vy$ arrived:
  \begin{itemize}
    \item the vertex $y$ was unmatched, otherwise $vy$ would not be added to
    $S$;
    \item no vertex $x \in N_S(u)$ was matched, otherwise $u,v \in I_B$, and
    $vy$ would not be added to $S$.
  \end{itemize}
  The above implies that when $vy$ arrived, due to some $x \in N_S(u)$ the if
  condition on~\Cref{line:c3aug} became true, and we augmented $uv$ via $xuvy$
  instead of adding $vy$ to $S$.  This is a contradiction.

  For general graphs, we argue by contradiction slightly informally for the sake
  of brevity.  By Observation~\ref{note1:2pimp2}, for general graphs,
  $\lambdaM\geq 2$.  Let $\deg_S(u) + \deg_S(v) > \lambdaM \ge 2$.  Let $vy$ be
  the second edge incident to one of $u$ and $v$ that was added to $S$; the
  first edge can be $xu$ or $vy'$.

  Suppose $xu$ was the first edge.  If $x \neq y$, then we would have augmented
  $uv$ via $xuvy$ instead of adding $vy$ to $S$---a contradiction.  If $x = y$,
  then after $vy$ was processed, $N_S(u) = N_S(v) = \{y\}$, and a third edge
  incident to one of $u$ and $v$ would not be added to $S$, because it would
  have formed a $3$-augmenting path with either $yu$ or $vy$, resulting in a
  contradiction that $\deg_S(u) + \deg_S(v) = 2$.

  Otherwise, suppose $vy'$ was the first edge; then $N_S(v) = \{y,y'\}$ after
  $vy$ was processed.  Since eventually
  $\deg_S(u) + \deg_S(v) \ge \lambdaM + 1 \ge 3$ and
  $\deg_S(u),\deg_S(v)\le \lambdaM$, we would eventually have $\deg_S(u) \ge 1$,
  so let $xu \in S$.  When $xu$ arrived, it would have formed an $3$-augmenting
  path with either $vy$ or $vy'$ (here, taking care of the fact that one of $y$
  and $y'$ can be same as $x$), resulting in a contradiction that $xu$ was not
  added to $S$.
  
  Thus, we get the claim and complete the proof.
\end{proof}
As a consequence, we get the following.
\begin{note}\label{note2:2pimp2}
  In any call to \textproc{Improve-Matching}$()$, we need to set
  $\lambdaU>\lambdaM$, i.e., $\lambdaU \ge 2$.
\end{note}
To see why, suppose $\lambdaU\le\lambdaM$.  Then by~\Cref{lem3:2pimp2},
potentially all $3$-augmentable edges in $M_0$ could become bad edges.

Recall that a $3$-augmentable edge is good, partially good, or bad; so
by~\Cref{lem:kmmlem1,lem3:2pimp2},
\begin{align}
  \text{\# good or partially good edges}
  &\geq \left(\frac{1}{2}-3\advntg\right)|M^*| -
    \frac{\lambdaM|M_0|}{\lambdaU} \nonumber\\ 
  &= \left(\frac{1}{2}-3\advntg\right)|M^*| -
    \frac{\lambdaM}{\lambdaU}\left(\frac{1}{2}+\advntg\right)|M^*| \nonumber\\ 
  &= \left( \frac{\lambdaU-\lambdaM}{2\lambdaU}
    -\left(\frac{3\lambdaU+\lambdaM}{\lambdaU}\right)\advntg\right)|M^*|\,.
    \label{eq:bndgpg}
\end{align}
In the following Lemma, we bound the number of partially good edges in $M_0$
that are charged to one good edge.
\begin{lem}\label{lem4:2pimp2}
  At most $2\lambdaU-1$ partially good edges in $M_0$ are charged to one good
  edge in $M_0$. 
\end{lem}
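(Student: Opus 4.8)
The goal is to show that at most $2\lambdaU - 1$ partially good edges get charged to a single good edge $u'v' \in M_0$. My plan is to carefully audit the charging scheme and count, for a fixed good edge $u'v'$, how many distinct partially good edges can name it as their charge target. The key observation is that a partially good edge $uv$ charges to $u'v'$ only because some specific $S$-edge incident to $u'$ or $v'$ was ``responsible'': in every branch of the charging scheme where $uv$ is charged to $u'v'$, the reason is that a vertex ($a$, or an $a' \in N_S(u)$, or the endpoint of the $M^*$-edge at $b$) was matched precisely via the augmenting path $au'v'b'$ (or $a'u'v'b'$) that made $u'v'$ good. Thus each charge to $u'v'$ is witnessed by an edge of $S$ incident on one of the two endpoints $u', v'$.

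First I would enumerate the charging branches and extract, in each case, the $S$-edge on $\{u',v'\}$ that justifies the charge. In the first branch ($a$ already matched via $au'v'b'$), the witness is the $S$-edge incident to $u'$ (or $v'$) whose other endpoint is the now-matched vertex $a$; similarly for the branch using \Cref{lem2:2pimp2}. In the last branch (where $a' \in N_S(u)$ was matched via $a'u'v'b'$), the witness is again an $S$-edge landing on $u'$ or $v'$. The crucial counting fact, established in the proof of \Cref{lem3:2pimp2}, is that $\deg_S(u') + \deg_S(v') \le \lambdaM$; but more relevantly, I would argue that each $S$-edge incident on $\{u',v'\}$ can serve as the witness for at most one partially good edge, since that $S$-edge determines the endpoint $a$ (or $a'$) uniquely, and $a$ in turn determines the $3$-augmenting path $auvb$ and hence $uv$.

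Then I would bound the total number of available witness edges. The endpoints $u'$ and $v'$ of the good edge can each have up to $\lambdaU$ edges of $S$ incident on them from the ``unmatched side'' and the augmentation itself consumes one configuration; counting the distinct $S$-edges on $u'$ and $v'$ that can trigger a charge, together with the one edge used in the actual augmentation of $u'v'$, should yield the bound $2\lambdaU - 1$. Concretely, the two endpoints give at most $2\lambdaU$ incident $S$-slots, but the edge that actually augmented $u'v'$ (making it good in the first place) is not itself a charging witness, which removes one, giving $2\lambdaU - 1$.

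The main obstacle I anticipate is handling the ``partially good via $\deg_S(u) = \lambdaM$'' branch cleanly, because there the charge can go either to $u'v'$ (through some $a' \in N_S(u)$) or to $u''v''$, and I must make sure I am not double-counting: a single partially good edge is charged to exactly one good edge, but I need to verify that the witness $S$-edge I assign to it on the $u'v'$ side is genuinely distinct across the different partially good edges charged to $u'v'$, and that the $a'$ selected is well-defined (e.g., choosing a canonical matched neighbor). I would formalize an injection from the set of partially good edges charged to $u'v'$ into the set of $S$-edges incident to $\{u',v'\}$ minus the augmenting edge, and then invoke the degree bound to cap the image size at $2\lambdaU - 1$. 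Verifying injectivity in each branch—especially distinguishing the $u'$-side witnesses from the $v'$-side witnesses and ruling out collisions with the augmenting edge—is where the real care is needed.
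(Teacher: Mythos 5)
Your witness set sits on the wrong side of the bipartition, and that is fatal to the counting. You propose to injectively map each partially good edge charged to $u'v'$ to an $S$-edge incident on $\{u',v'\}$. But $u'$ and $v'$ are \emph{matched} vertices, so the number of $S$-edges touching them is governed by $\lambdaM$, not $\lambdaU$: indeed, as shown in the proof of \Cref{lem3:2pimp2} (a fact you yourself quote), $\deg_S(u')+\deg_S(v')\le\lambdaM$. Your count of ``$2\lambdaU$ incident $S$-slots'' therefore has no basis---the $\lambdaU$ cap applies only to unmatched vertices. Moreover, the good edge's augmenting path $xu'v'y$ contains only \emph{one} $S$-edge (say $xu'$; the edge $v'y$ is the arriving stream edge that triggered the augmentation and is never in $S$), so for charges routed through $y$ the witness edge you want to assign need not exist in $S$ at all. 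Finally, the injectivity you rely on is false: in the main charging case, $uv$ is charged to $u'v'$ because some $a'\in N_S(u)$ was matched via $a'u'v'b'$, and when $a'$ gets matched, \emph{every} $M_0$-edge with an endpoint in $N_S(a')$---up to $\lambdaU$ of them---is blocked at once and charged to the same good edge through the same vertex $a'$. Concretely, with $\lambdaU=2$, $\lambdaM=1$ (triangle-free case), take $M_0$-edges $u'v',p_1q_1,p_2q_2,p_3q_3$ and a stream in which $xu',xp_1,yp_2,yp_3$ enter $S$ and then $v'y$ arrives, augmenting $u'v'$ along $yv'u'x$: all three edges $p_iq_i$ can end up partially good and charged to $u'v'$, yet at most one $S$-edge ($xu'$) is incident on $\{u',v'\}$. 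No injection into your witness set (let alone that set minus the augmenting edge) can exist, and the true count $3=2\lambdaU-1$ is exactly the bound being proved.

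The paper's proof places the witnesses where the blame actually lives: on the two \emph{unmatched} endpoints $x$ and $y$ of the augmenting path that made $u'v'$ good, with $xu'\in S$. It sets $B:=(N_S(x)\setminus\{u'\})\cup N_S(y)$, so that $|B|\le(\lambdaU-1)+\lambdaU=2\lambdaU-1$ by the degree cap $\deg_S\le\lambdaU$ on unmatched vertices, argues that every partially good edge charged to $u'v'$ has an endpoint in $B$, and finishes with $|M_0(B)|\le|B|$ because $M_0$ is a matching. Note that $2\lambdaU-1$ arises as $(\lambdaU-1)+\lambdaU$, not as your ``$2\lambdaU$ slots minus the augmenting edge.'' The obstacle you flagged at the end---verifying injectivity---is not a matter of extra care; it is precisely where the proposal collapses, and the repair is to change the witness set from $S$-edges at $\{u',v'\}$ to the $S$-neighborhoods of $x$ and $y$, not to refine the injection.
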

\begin{proof}
  Suppose $uv\in M_0$ was augmented by edges $xu$ and $vy$ such that $xu$ arrived
  before $vy$, then $xu \in S$.  Now $|N_S(x)|, |N_S(y)| \le \lambdaU$. Since
  $xu \in S$, we have $|N_S(x)\setminus \{u\}| \le \lambdaU - 1$.  Let $B :=
  (N_S(x)\setminus\{u\}) \cup N_S(y)$, then $|B| \le 2\lambdaU - 1$.  Now,
  the set of partially good edges that are charged to $uv$ is a subset of
  $M_0(B)$. Observing that $|M_0(B)| \le |B| \le 2\lambdaU -1$ finishes the proof.
\end{proof}

The following lemma characterizes the improvement given by
\textproc{Improve-Matching}$()$.
\begin{lem}\label{eq:gbound}
  Let $|M_0| = (1/2+\advntg)|M^*|$ and
  $M = \textproc{Improve-Matching}(M_0, \lambdaU, \lambdaM)$, then
  \[
    |M| \ge \left(\frac{1}{2} + \frac{\lambdaU-\lambdaM}{4\lambdaU^2} +
      \left(1-\frac{3\lambdaU+\lambdaM}{2\lambdaU^2}\right)\advntg\right)|M^*|
    \ge\left(\frac{1}{2} + \frac{\lambdaU-\lambdaM}{4\lambdaU^2}\right)|M^*|\,.
  \]
\end{lem}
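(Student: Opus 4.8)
The plan is to read off the output size directly from the number of augmentations and then lower-bound that number by combining the two counting lemmas that have already been set up. First I would observe that each execution of the augmentation step on~\Cref{line:c3aug} removes exactly one $M_0$-edge $yv$ and inserts the two edges $xy,vb$, so it increases $|M|$ by exactly one, and that by definition an $M_0$-edge is labelled \emph{good} precisely when it is the edge that gets augmented. Hence, if $g$ denotes the number of good edges, then $|M| = |M_0| + g = (1/2 + \advntg)|M^*| + g$, and the entire task reduces to bounding $g$ from below.

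To do this I would bound the partially good edges in terms of $g$. Every partially good edge is charged, by the charging scheme, to a single good edge, and by~\Cref{lem4:2pimp2} at most $2\lambdaU - 1$ partially good edges are charged to any one good edge; summing over the $g$ good edges gives that the number of partially good edges is at most $(2\lambdaU - 1)g$. Since every good edge counted on the left of~\eqref{eq:bndgpg} is itself a good edge, the number of good-or-partially-good edges is at most $g + (2\lambdaU - 1)g = 2\lambdaU\, g$. Comparing with the lower bound~\eqref{eq:bndgpg} and dividing by $2\lambdaU$ yields
\[
  g \ge \left(\frac{\lambdaU - \lambdaM}{4\lambdaU^2} - \frac{3\lambdaU + \lambdaM}{2\lambdaU^2}\,\advntg\right)|M^*|\,.
\]

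Substituting this into $|M| = (1/2 + \advntg)|M^*| + g$ and collecting the coefficient of $\advntg$ gives
\[
  |M| \ge \left(\frac{1}{2} + \frac{\lambdaU - \lambdaM}{4\lambdaU^2} + \left(1 - \frac{3\lambdaU + \lambdaM}{2\lambdaU^2}\right)\advntg\right)|M^*|\,,
\]
which is exactly the first claimed inequality. For the second inequality I would simply discard the $\advntg$-term: this is legitimate because $\advntg \ge 0$ and the coefficient $1 - (3\lambdaU + \lambdaM)/(2\lambdaU^2)$ is nonnegative. Indeed, using $\lambdaU \ge 2$ and $\lambdaM \le \lambdaU - 1$ from~\Cref{note2:2pimp2}, one checks $2\lambdaU^2 - 3\lambdaU - \lambdaM \ge 2\lambdaU^2 - 4\lambdaU + 1 \ge 1 > 0$.

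The genuinely load-bearing step is the passage from the per-good-edge cap of~\Cref{lem4:2pimp2} to the global bound $\text{(\# partially good)} \le (2\lambdaU - 1)g$: it needs that the charging scheme assigns each partially good edge to exactly one good edge and that bad edges have already been removed via~\eqref{eq:bndgpg}. Everything after that is arithmetic, so the only real care is to trust that the good/partially good/bad labelling is exhaustive on $3$-augmentable edges (as asserted when the labels were introduced) and that no partially good edge is ever charged to a non-good edge.
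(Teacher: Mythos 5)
Your proposal is correct and follows essentially the same route as the paper's own proof: combine the bound~\eqref{eq:bndgpg} on good-or-partially-good edges with the per-good-edge cap of \Cref{lem4:2pimp2} to conclude that the number of augmentations is at least $\frac{1}{2\lambdaU}$ times that bound, then add this to $|M_0| = (1/2+\advntg)|M^*|$ and drop the $\advntg$-term. The only difference is cosmetic: you spell out the verification that $1 - (3\lambdaU+\lambdaM)/(2\lambdaU^2) \ge 0$ using $\lambdaM \le \lambdaU - 1$ and $\lambdaU \ge 2$, where the paper simply cites Observation~\ref{note2:2pimp2}.
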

\begin{proof}
  By~\eqref{eq:bndgpg} and~\Cref{lem4:2pimp2}, the total number of
  augmentations during one call to \textproc{Improve-Matching}$()$ is at least
\begin{align*}
  \frac{1}{2\lambdaU} \left( \frac{\lambdaU-\lambdaM}{2\lambdaU}
  -\left(\frac{3\lambdaU+\lambdaM}{\lambdaU}\right)\advntg\right)|M^*|
  =\left(\frac{\lambdaU-\lambdaM}{4\lambdaU^2}  
  -\left(\frac{3\lambdaU+\lambdaM}{2\lambdaU^2}\right)\advntg\right)|M^*|\,. 
\end{align*}
 
Hence, we get the following bound on the size of the output matching $M$:
\begin{align*}
  |M|&\geq |M_0| +\left(\frac{\lambdaU-\lambdaM}{4\lambdaU^2}  
       -\frac{3\lambdaU+\lambdaM}{2\lambdaU^2}\advntg\right)|M^*|\\ 
     &=\left(\frac{1}{2} + \frac{\lambdaU-\lambdaM}{4\lambdaU^2} + \left(1
       -\frac{3\lambdaU+\lambdaM}{2\lambdaU^2}\right)\advntg\right)|M^*|
     && \text{because $|M_0| = (1/2 + \advntg)|M^*|$ }\,,\\
     &\ge\left(\frac{1}{2} + \frac{\lambdaU-\lambdaM}{4\lambdaU^2}\right)|M^*|
     &&\text{since $\lambdaU \ge 2$ (see Observation~\ref{note2:2pimp2})}\,.
        \qedhere
\end{align*}
\end{proof}
Now we state and prove our main result.
\begin{theorem}
  \label{thm:2pimp2}
  \Cref{alg:2pimp2} uses two passes and has an approximation ratio of
  $1/2 + 1/16$ for tri\-angle-free graphs and an approximation ratio
  of $1/2 + 1/32$ for general graphs for maximum matching.
\end{theorem}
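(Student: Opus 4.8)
The plan is to obtain both ratios as immediate consequences of \Cref{eq:gbound}, after a short discrete optimization that justifies the two parameter settings hard-coded into \Cref{alg:2pimp2}. First I would dispose of the structural claims: the algorithm makes exactly two passes, one to compute the maximal matching $M_0$ and one to run the body of \textproc{Improve-Matching}$()$, which is a single \textbf{foreach} loop over the stream; together with the space and update-time remarks from \Cref{sec:prelim}, this settles the pass-count and semi-streaming guarantees. All that remains is the approximation ratio.

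For the ratio I would substitute the algorithm's parameters directly into the bound $|M| \ge \bigl(\tfrac12 + \tfrac{\lambdaU-\lambdaM}{4\lambdaU^2}\bigr)|M^*|$ from \Cref{eq:gbound}. For triangle-free graphs the call is \textproc{Improve-Matching}$(M_0,2,1)$, so $\lambdaU=2,\lambdaM=1$ and $\tfrac{\lambdaU-\lambdaM}{4\lambdaU^2}=\tfrac1{16}$; for general graphs the call is \textproc{Improve-Matching}$(M_0,4,2)$, so $\lambdaU=4,\lambdaM=2$ and $\tfrac{\lambdaU-\lambdaM}{4\lambdaU^2}=\tfrac{2}{64}=\tfrac1{32}$. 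This already yields the two stated approximation ratios.

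To explain \emph{why} these are the parameters one should fix, I would maximize $f(\lambdaU,\lambdaM):=\tfrac{\lambdaU-\lambdaM}{4\lambdaU^2}$ over the feasible integers. Since $f$ strictly decreases in $\lambdaM$, the optimum makes $\lambdaM$ as small as feasibility permits, namely $\lambdaM=1$ in the triangle-free case and $\lambdaM=2$ in the general case (by Observation~\ref{note1:2pimp2}). With $\lambdaM$ fixed, treating $\lambdaU$ as a real variable gives a unique interior maximizer $\lambdaU=2\lambdaM$, which happens to be an integer, equal to $2$ for $\lambdaM=1$ and $4$ for $\lambdaM=2$; the admissibility constraint $\lambdaU>\lambdaM$ (Observation~\ref{note2:2pimp2}) holds in both cases. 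A one-line check of neighboring integers ($\tfrac1{16}>\tfrac1{18}>\tfrac{3}{64}$ in the triangle-free case, and $\tfrac1{32}>\tfrac1{36}$, $\tfrac1{32}>\tfrac{3}{100}$ in the general case) confirms optimality.

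Because the entire structural argument---the charging scheme together with \Cref{lem3:2pimp2,lem4:2pimp2,eq:gbound}---is already in place, there is essentially no remaining obstacle; the theorem is a corollary of \Cref{eq:gbound}. The only point requiring a moment's care is the discrete optimization, and it is painless precisely because the continuous maximizer $\lambdaU=2\lambdaM$ lands exactly on the integers used by the algorithm, so no rounding loss occurs and the clean closed-form settings $(2,1)$ and $(4,2)$ turn out to be optimal.
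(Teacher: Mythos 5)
Your proposal is correct and takes essentially the same route as the paper: the paper's proof is exactly the substitution of the hard-coded parameters $(\lambdaU,\lambdaM)=(2,1)$ and $(4,2)$ into \Cref{eq:gbound}, yielding $\frac{\lambdaU-\lambdaM}{4\lambdaU^2}=\frac{1}{16}$ and $\frac{1}{32}$ respectively. Your additional discrete optimization showing that $\lambdaU=2\lambdaM$ maximizes the advantage is correct and explains the paper's (otherwise unexplained) parameter choices, but it is not needed for the theorem itself.
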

\begin{proof}
  By~\Cref{eq:gbound}, after the second pass, the output size
  $|M|\ge (1/2 + (\lambdaU-\lambdaM)/(4\lambdaU^2))|M^*|$; we use $\lambdaU=2$
  and $\lambdaM=1$ for triangle-free graphs and $\lambdaU=4$ and $\lambdaM=2$
  (see~Observation~\ref{note1:2pimp2}) for general graphs to get the claimed
  approximation ratios.
\end{proof}



\section{Multi Pass Algorithm}
\label{sec:multi-pass-algor}
We run the function \textproc{Improve-Matching}$()$ in~\Cref{alg:2pimp2} with
increasing values of $\lambdaU$, and the approximation ratio converges to
$1/2 + 1/6$.  We note that this multi-pass algorithm is not just a repetition of
the function \textproc{Improve-Matching}$()$.  Such a repetition will give an
asymptotically worse number of passes (see, for example, the multi-pass
algorithm due to Feigenbaum et al.~\cite{fgnbm}).  We carefully choose the
parameter $\lambdaU$ for each pass to get the required number of passes.
\begin{algorithm}[!ht]
  \caption{~~Multi-pass algorithm\label{alg:mp}: input graph $G$}
  \begin{algorithmic}[1]
    \State In the first pass, find a maximal matching $M_1$.
    \State $M\gets M_1$
    \If{$G$ is triangle-free}
    \For{$i = 2$ to $\lceil 2/(3\eps)\rceil$}
    \State $M\gets$ Improve-Matching$(M,i,1)$
    \EndFor
    \Else
    \For{$i = 2$ to $\lceil 4/(3\eps)\rceil$}
    \State $M\gets$ Improve-Matching$(M,i+1,2)$
    \EndFor
    \EndIf
    \State Return $M$.
  \end{algorithmic}
\end{algorithm}
\begin{theorem}
  \label{thm:mp}
  For any $\eps > 0$,~\Cref{alg:mp} is a semi-streaming
  $(1/2 + 1/6 - \eps)$-approximation algorithm for maximum matching that uses
  $2/(3\eps)$ passes for triangle-free graphs and $4/(3\eps)$ passes for general
  graphs.
\end{theorem}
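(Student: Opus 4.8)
The plan is to track how the \emph{advantage} of the running matching grows from pass to pass and to show it approaches $1/6$ fast enough. Fix a maximum matching $M^*$ of $G$ (the same throughout, since the graph does not change), write $M^{(i)}$ for the matching held after the $i$-th pass, and define $\advntg_i$ by $|M^{(i)}| = (1/2 + \advntg_i)|M^*|$; thus $M^{(1)} = M_1$ and, in the triangle-free case, $M^{(i)} = \textproc{Improve-Matching}(M^{(i-1)}, i, 1)$ for $i \ge 2$. The first thing I would establish is that \Cref{eq:gbound} is applicable at \emph{every} pass, not only the first. For this I would observe that a call to \textproc{Improve-Matching} only ever augments, so $V(M^{(i)}) \supseteq V(M^{(i-1)})$; hence if $V(M^{(i-1)})$ is a vertex cover then so is $V(M^{(i)})$, and the output of each pass is again a maximal matching. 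Since augmenting never shrinks the matching, $\advntg_i \ge \advntg_{i-1} \ge 0$, and we may assume $\advntg_i < 1/6$ at every pass (otherwise the non-decreasing advantage already exceeds $1/6 - \eps$ and we are done). This lets me feed $M^{(i-1)}$, with its own advantage $\advntg_{i-1}$, into \Cref{eq:gbound}.

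This yields the recursion $\advntg_i \ge \frac{\lambdaU-\lambdaM}{4\lambdaU^2} + \bigl(1 - \frac{3\lambdaU+\lambdaM}{2\lambdaU^2}\bigr)\advntg_{i-1}$, with $(\lambdaU,\lambdaM) = (i,1)$ in the triangle-free case and $(\lambdaU,\lambdaM)=(i+1,2)$ in the general case (all parameter constraints $\lambdaU \ge 2$, $\lambdaU > \lambdaM$, and $\lambdaM \ge 2$ for general graphs are met). Note that a \emph{fixed} $\lambdaU$ would only push $\advntg$ toward the fixed point $\frac{\lambdaU-\lambdaM}{2(3\lambdaU+\lambdaM)} < 1/6$, which is bounded away from $1/6$; this is exactly why the algorithm must grow $\lambdaU$ with the pass index rather than repeat a single value.

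The core of the argument is to turn this into a contraction on the gap $g_i := 1/6 - \advntg_i$. Substituting $\advntg_{i-1} = 1/6 - g_{i-1}$ and combining the constant terms, I expect the triangle-free recursion to become $g_i \le \frac{1}{3i^2} + \bigl(1 - \frac{3i+1}{2i^2}\bigr)g_{i-1} \le \frac{1}{3i^2} + \bigl(1 - \frac{3}{2i}\bigr)g_{i-1}$ (the last step valid since $g_{i-1} \ge 0$), and the general recursion the analogue with additive error $\frac{2}{3(i+1)^2}$ and factor $1 - \frac{3}{2(i+1)}$. I would then prove by induction on $i$ that $g_i \le \frac{2}{3i}$ (triangle-free) and $g_i \le \frac{4}{3(i+1)}$ (general): the base case $i=2$ follows from $g_2 \le 1/6$, and the inductive step reduces, after clearing denominators, to the manifestly true inequality $\frac{1}{i} - \frac{1}{i-1} \le 0$ (respectively its shifted analogue). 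This induction is the step I expect to be the main obstacle, since one must guess the correct $\Theta(1/i)$ ansatz and check that the per-pass additive error of order $1/i^2$ is absorbed by the contraction factor; a fixed-constant bound or a crude telescoping would not converge.

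Finally I would read off the conclusion. After the last triangle-free pass $i = \lceil 2/(3\eps)\rceil$ we get $g_i \le \frac{2}{3\lceil 2/(3\eps)\rceil} \le \eps$, hence $\advntg_i \ge 1/6 - \eps$ and the output is $(1/2 + 1/6 - \eps)$-approximate; the general case is identical using $\lceil 4/(3\eps)\rceil$ and the bound $g_i \le \frac{4}{3(i+1)}$. For the pass count, the single maximal-matching pass together with the $\lceil 2/(3\eps)\rceil - 1$ (respectively $\lceil 4/(3\eps)\rceil - 1$) loop iterations give exactly $2/(3\eps)$ (respectively $4/(3\eps)$) passes, and the $O(n\log n)$ space and $O(1)$ update time are inherited pass-by-pass from \textproc{Improve-Matching}.
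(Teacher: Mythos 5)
Your proposal is correct and follows essentially the same route as the paper: apply \Cref{eq:gbound} pass by pass with $(\lambdaU,\lambdaM)=(i,1)$ (resp.\ $(i+1,2)$) and prove by induction that the advantage after pass $i$ is at least $1/6 - \Theta(1/i)$ --- your gap bound $g_i \le \frac{2}{3i}$ is literally the paper's inductive claim $\advntg_i \ge \frac{1}{6} - \frac{2}{3i}$, and your reformulation via the contraction $g_i \le \frac{1}{3i^2} + \bigl(1-\frac{3}{2i}\bigr)g_{i-1}$ just cleans up the same algebra (the paper clears denominators to reach $8i^2-8i-8 \le 8i^2-8i$ where you reach $\frac{1}{i} \le \frac{1}{i-1}$). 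Your explicit justifications that intermediate matchings remain maximal and that advantages are non-decreasing (so one may assume $g_{i-1}\ge 0$) are details the paper compresses into ``by the logic of \Cref{alg:mp}\ldots by a trivial induction,'' not a different argument.
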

\begin{proof}
  We prove the theorem for triangle-free case; the general case is similar.
  Let $M_i$ be the matching computed by~\Cref{alg:mp} after $i$th pass,
  and let $p:=\lceil 2/(3\eps)\rceil$, so $\eps \le 2/(3p)$.  Since $M_1$ is
  maximal, it is $(1/2)$-approximate.  Let $\advntg_1 := 0$, and for
  $i \in \{2,3,\ldots,p\}$, let
  \[
    \advntg_i := \frac{i-1}{4i^2} + \left(1 -
      \frac{3i+1}{2i^2}\right)\advntg_{i-1} \,
  \]
  (see~\Cref{eq:gbound} with $\lambdaU = i$ and $\lambdaM = 1$).  Then,
  by~\Cref{eq:gbound} and the logic of~\Cref{alg:mp}, for $i \in [p]$, the
  matching $M_i$ is $(1/2 + \advntg_i)$-approximate (by a trivial induction).
  Now we bound $\advntg_p$ by induction.  We claim
  that for $i\in [p]$, 
  \[
     \advntg_i \ge \frac{1}{6} - \frac{2}{3i}\,,
  \]
  which we prove by induction on $i$.

  Base case: For $i = 1$, we have
  $1/6 - \advntg_1 = 1/6 - 0 = 1/6 \le 2/(3\cdot 1)$.

  For inductive step, we want to show that
  \begin{align*}
    &&\frac{1}{6} - \advntg_i =  \frac{1}{6} - \frac{i-1}{4i^2} - \left(1 -
    \frac{3i+1}{2i^2}\right)\advntg_{i-1} &\le \frac{2}{3i}\,,\\
    \intertext{\text{which is implied by the following (using inductive
    hypothesis)}} 
    &&\frac{1}{6} - \frac{i-1}{4i^2} + \left(1 -
    \frac{3i+1}{2i^2}\right)\left(\frac{2}{3(i-1)}-\frac{1}{6}\right)
                                          &\le \frac{2}{3i}\,,\\
    \text{which is implied by} &&\frac{1}{6} - \frac{i-1}{4i^2} +
    \left(\frac{2i^2 -3i -1}{2i^2}\right)\left(\frac{4-i+1}{6(i-1)}\right) 
                                          &\le \frac{2}{3i}\,,\\
    \intertext{multiplying both sides by $12i^2(i-1)$, we then need to show
    that,} 
    &&2i^2(i-1) - 3(i-1)^2 + (2i^2-3i-1)(-i+5)&\le 8i(i-1)\\
    \text{which is implied by}&&
    2i^3-2i^2 - 3(i^2-2i+1) + (-2i^3+10i^2+3i^2-15i+i-5)&\le 8i^2-8i\\
    \text{which is implied by}&&
    2i^3-5i^2 +6i-3 + (-2i^3+13i^2-14i-5)&\le 8i^2-8i\\
    \text{which is implied by}&&
    8i^2-8i -8 &\le 8i^2-8i
  \end{align*}
  which is true, so we get the claim.  Therefore
  $\advntg_p\ge 1/6 - 2/(3p) \ge 1/6 - \eps$, and by our earlier observation,
  $M_p$ is $(1/2 + \advntg_p)$-approximate, and this finishes the proof for
  triangle-free case.  The proof for general case is very similar.  We define
  $p := \lceil 4/(3\eps)\rceil$ and $\advntg_1 := 0$, and for
  $i \in \{2,3,\ldots,p\}$, we define
  \[
    \advntg_i := \frac{i-1}{4(i+1)^2} + \left(1 -
      \frac{3(i+1)+2}{2(i+1)^2}\right)\advntg_{i-1} \,,
  \]
  i.e., we use $\lambdaU = i + 1$ and $\lambdaM = 2$.  The corresponding claim
  then is that for $i\in [p]$,
  \[
    \advntg_i \ge \frac{1}{6} - \frac{4}{3i}\,,
  \]
  which can be verified by induction on $i$.
\end{proof}


\subparagraph*{Acknowledgements.}  We thank Sundar Vishwanathan and Ashish
Chiplunkar for helpful discussions.  The first author would like to thank his
advisor Amit Chakrabarti and Andrew McGregor for helpful discussions.


\bibliographystyle{plain}
\bibliography{references_mcm}
\appendix
\section{Three Pass Algorithm for Triangle Free Graphs}
\label{sec:3ptf}
For completeness, we present our three-pass algorithm for triangle-free graphs.
\algrenewcommand\algorithmicforall{\textbf{foreach}}
\begin{algorithm}[!ht]
  \caption{~Three-pass algorithm for triangle-free graphs \label{alg:3ptf}}
  \begin{algorithmic}[1]
    \State In the first pass, find a maximal matching $M_0$.
    \State In the second pass, find a maximal matching $M_1$ in $F_1 := \{uv : u
    \in V\setminus V(M_0), v\in V(M_0)\}$.
    \State After the second pass:
    \begin{itemize}
     \item $M'_1\gets$ arbitrary largest subset of $M_1$ such that there is no
      $3$-augmenting path in $M'_1 \cup M_0$ with respect to $M_0$
    \item $V_2\gets \{x \in V(M_0) : \exists v,w \text{ such that } vw \in M'_1
      \text{ and } wx \in M_0\} $
    \item For $x \in V_2$, denote by $P(x)$ the vertex $v$ such that there
      exists $w$ with $vw \in M'_1$ and $wx\in M_0$. See $x$ and $P(x)$
      in~\Cref{fig:2ptf}.
    \end{itemize}
    \State In the third pass: $F_2 := \{xy : x \in V_2, y \in V\setminus
    V(M_0)\}$ 
    \State $M_2\gets\emptyset$
    \For {edge $xy \in F_2$}
    \If{$x$, and $y$ are unmarked}
    \State $M_2\gets M_2\cup\{xy\}$; since the graph is triangle free, $y \neq
    P(x)$, and we can augment $M_0$ using $xy$.
    \State Mark $P(x)$, $x$, $y$, and $P^{-1}(y)$ (if exists).
    \EndIf
    \EndFor
    \State Let $M$ be largest of $M_3$ and $M'_3$ which are computed below.
    \begin{itemize}
    \item Augment $M_0$ using edges in $M_1$ to get $M_3$.
    \item Augment $M_0$ using edges in $M'_1$ and $M_2$ to get $M'_3$.
    \end{itemize}
    \State  Output $M$.
  \end{algorithmic}
\end{algorithm}
\begin{theorem}
  \label{thm:3ptf}
  \Cref{alg:3ptf} is a three-pass, semi-streaming, $(1/2 +1/10)$-approximation
  algorithm for maximum matching in triangle-free graphs, and the analysis is
  tight.
\end{theorem}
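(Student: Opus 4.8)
The plan is to mirror the analysis of \Cref{thm:3pkmmimpimp}, adapting it to the non-bipartite setting by exploiting the Observation that the edges incident on $V\setminus V(M_0)$ form a bipartite graph with parts $V\setminus V(M_0)$ and $V(M_0)$; in particular $F_1$ is bipartite, so the maximal-matching counting used for $M_A$ there carries over to $M_1$. As before, I would fix a maximum matching $M^*$ whose nonaugmenting components of $M_0\cup M^*$ are single edges, let $k_i$ count the $i$-augmenting paths and $k=|M_0\cap M^*|$, and use \eqref{eq:m0mstarval} together with $\sum_i k_i=|M^*|-|M_0|$. The two end $M^*$-edges of the augmenting paths form a matching of size $2\sum_i k_i$ inside $F_1$, so maximality of $M_1$ gives $|M_1|\ge\sum_i k_i$.

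Next I would analyze the two candidate outputs separately and take the better one. For $M_3$, every $M_0$-edge both of whose endpoints are covered by $M_1$ yields one length-$3$ augmentation (the two covering $M_1$-edges have distinct free endpoints, since $M_1$ is a matching), so the gain of $M_3$ equals the number $T$ of fully covered $M_0$-edges. For $M'_3$, the thinned set $M'_1$ keeps exactly one $M_1$-edge per covered $M_0$-edge, hence $|M'_1|=|V_2|$ equals the number $S$ of $M_0$-edges covered on at least one side, and $|M_1|=S+T$; each edge $xy\in M_2$ found in the third pass completes the half-augmentation recorded at $x\in V_2$ (the edges $vw\in M'_1$ and $wx\in M_0$ with $v=P(x)$) into a full length-$3$ augmenting path, so the gain of $M'_3$ is exactly $|M_2|$. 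Triangle-freeness is what guarantees $y\neq P(x)$, so that each accepted $M_2$-edge is a genuine augmentation rather than closing a triangle.

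The crux is lower-bounding $|M_2|$ against the marking performed in the third pass. I would identify, for each augmenting path, the $V_2$-vertices whose completion is forced by the path structure (for a $3$-augmenting path, the second $M_0$-endpoint can be completed by the opposite free endpoint), exhibit a large matching of such completable pairs inside $F_2$, and then show that the greedy choice together with the marking of $P(x),x,y,P^{-1}(y)$ loses at most a constant factor: each accepted edge blocks at most one other completable $V_2$-vertex (through $P^{-1}(y)$) and consumes at most two free vertices. Combining the resulting bound on $|M_2|$ with $T=|M_1|-S\ge\sum_i k_i-S$ and taking the maximum of the two strategies, I expect to reach $\max(T,|M_2|)\ge\tfrac34\sum_i k_i-\tfrac12|M_0|$, the same quantity that drives the bipartite proof; this yields $|M|\ge\tfrac34|M^*|-\tfrac14|M_0|$, which together with $|M|\ge|M_0|$ is minimized at $|M_0|=\tfrac35|M^*|=(1/2+1/10)|M^*|$.

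The main obstacle will be precisely this completion/marking accounting, and in particular explaining why the maximum of the two strategies is necessary: short ($3$-augmenting) paths are completed reliably by $M'_3$, whereas for longer paths $M'_1$ may half-augment an $M_0$-edge whose completion is unavailable in the third pass, and there the direct full coverages counted by $M_3$ must pick up the slack. For tightness I would take $k_3=k_5\ge 1$, $k=0$ and $k_i=0$ for $i>5$, as in the example of \Cref{fig:3pkmmte}, ordering the stream so that $M_1$ fully covers nothing and every third-pass completion is killed by marking, whence $|M|=|M_0|=3|M^*|/5$.
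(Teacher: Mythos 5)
Your proposal follows essentially the same route as the paper's proof: bound $|M_1|$ by maximality ($|M_1|\ge\sum_i k_i$, which coincides with the paper's bound $(1/2-\alpha)|M^*|$), observe that $M_3$ gains exactly the number $T$ of fully covered $M_0$-edges (the paper's $c$, the number of $3$-augmenting paths in $M_1\cup M_0$), that $M'_3$ gains exactly $|M_2|$, account for the four marks $P(x),x,y,P^{-1}(y)$ losing a factor $4$ against a reference matching $M_F\subseteq F_2$, take the maximum of the two candidates, and reuse the $k_3=k_5=1$ tight example. Whether one books the losses via the $k_i$-decomposition (as you do, following \Cref{thm:3pkmmimpimp}) or via \Cref{lem:kmmlem1} with the advantage parameter $\alpha$ (as the paper does) is immaterial.

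However, your key intermediate claim is miscalibrated and does not follow from your own steps. You assert $\max(T,|M_2|)\ge\frac34\sum_i k_i-\frac12|M_0|$, ``the same quantity that drives the bipartite proof,'' but that quantity arises in \Cref{thm:3pkmmimpimp} precisely because KMM's second-pass matching $M_A$ lands only on the $A$-side of $M_0$ (so no thinning is ever needed, no $-T$ term appears) and the third-pass loss factor there is $2$, not $4$. In \Cref{alg:3ptf} the facts you establish are of the form $|M_2|\ge\frac14|M_F|$ with $|M_F|\ge X-T$ (where $X$ is roughly $3\sum_i k_i-2|M_0|$, or $\sum_i k_i+k_3-|M_0|$ with sharper counting), and balancing $T$ against $\frac14(X-T)$ yields only $\frac{X}{5}$: concretely, in the regime where every component of $M_0\cup M^*$ is a $3$-augmenting path, your claim promises a gain of $\frac14|M_0|$ while your facts give only $\frac15|M_0|$. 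Fortunately this discrepancy does not sink the argument: the corrected balance gives
\begin{equation*}
  |M|\;\ge\;|M_0|+\tfrac35\textstyle\sum_i k_i-\tfrac25|M_0|
  \;=\;\tfrac35|M_0|+\tfrac35\bigl(|M^*|-|M_0|\bigr)\;=\;\tfrac35|M^*|,
\end{equation*}
exactly the paper's conclusion (the paper balances $c$ against $\frac14\bigl((\frac12-5\alpha)|M^*|-c\bigr)$ to get a gain of $(\frac1{10}-\alpha)|M^*|$), and one does not even need the fallback $|M|\ge|M_0|$. A smaller point: your explanation of why the maximum of the two candidates is needed is backwards. It is not that longer augmenting paths force $M_3$ to ``pick up the slack''---half-augmentations on long paths are pure losses, absorbed by the bad-vertex subtraction; the maximum is needed because the thinning $M_1\to M'_1$ deliberately sacrifices the $T$ immediate augmentations, whose value is then recovered only through $M_3$.
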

\begin{proof}
  Let $|M_0| = (1/2 + \advntg) |M^*|$.  The number of edges in $M^*$
  incident on $V(M^*)\setminus V(M_0)$ is
  \begin{equation}
    \label{eq:mubound}
    |V(M^*)\setminus V(M_0)|
    \ge |V(M^*)|- |V(M_0)|
    = 2|M^*| - 2|M_0|
    = (1-2\advntg)|M^*|\,;    
  \end{equation}
  and these edges also belong to $F_1$. Since $M_1$ is a maximal matching in
  $F_1$,
  \begin{equation}
    \label{eq:3}
    |M_1| \ge (1-2\advntg)|M^*|/2 = (1/2-\advntg)|M^*|\,.
  \end{equation}
  Let $c$ be the number of $3$-augmenting paths in $M_1 \cup M_0$, so
  $|M'_1| = |M_1| - c$ by the definition of $M'_1$.  By~\Cref{lem:kmmlem1},
  there are at most $4\advntg|M^*|$ non-$3$-augmentable edges in $M_0$.  So at
  least $|M_1| - c - 4\advntg|M^*|$ edges of $M'_1$ are incident on
  $3$-augmentable edges of $M_0$.  Therefore there is a matching of size at
  least $|M_1| - c - 4\advntg|M^*|$ in $F_2$; consider one such matching $M_F$.
  We claim that $|M_2| \ge |M_F|/4$.  See~\Cref{fig:2ptf}.  Let $xy \in M_2$; we
  note that $xy$ disallows at most four edges in $M_F$ from being added to $M_2$
  due to the (at most) four marks that it adds, because a marked vertex can
  disallow at most one edge in $M_F$ (due to it being a matching), which shows
  the claim.  Hence: 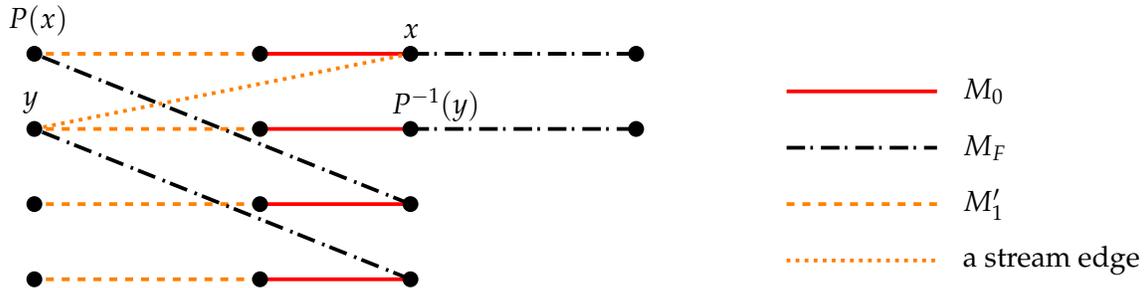
\begin{figure}
\centering
\begin{tikzpicture}[line width=0.5mm]

\draw [red] (0,0) -- (2,0);
\draw [red] (0,1) -- (2,1);
\draw [red] (0,2) -- (2,2);
\draw [red] (0,3) -- (2,3);

\draw [dash pattern={on 7pt off 2pt on 1pt off 3pt}] (2,2) -- (5,2);
\draw [dash pattern={on 7pt off 2pt on 1pt off 3pt}] (2,3) -- (5,3);

\draw [dashed,orange] (0,0) -- (-3,0);
\draw [dashed,orange] (0,1) -- (-3,1);
\draw [dashed,orange] (0,2) -- (-3,2);
\draw [dashed,orange] (0,3) -- (-3,3);

\draw [dash pattern={on 7pt off 2pt on 1pt off 3pt}] (2,1) -- (-3,3);
\draw [dash pattern={on 7pt off 2pt on 1pt off 3pt}] (2,0) -- (-3,2);

\draw [dotted,orange] (-3,2) -- (2,3);

\draw [red] (7,2.5) -- (9,2.5);
\draw [dash pattern={on 7pt off 2pt on 1pt off 3pt}] (7,1.75) -- (9,1.75);
\draw [dashed, orange] (7,1) -- (9,1);
\draw [dotted,orange] (7,0.25) -- (9,0.25);

\draw (10.5,2.5) node[anchor=east,text width=1cm] {$M_0$};
\draw (10.5,1.75) node[anchor=east,text width=1cm] {$M_F$};
\draw (10.5,1) node[anchor=east,text width=1cm] {$M_1'$};
\draw (11.85,0.25) node[anchor=east] {a stream edge};

\draw (1.75,3.55) node[anchor=north west] {$x$};
\draw (-3.5,3.8) node[anchor=north west] {$P(x)$};
\draw (-3.3,2.65) node[anchor=north west] {$y$};
\draw (1.6,2.70) node[anchor=north west] {$P^{-1}(y)$};

\fill [color=black] (0,0) circle (3pt);
\fill [color=black] (2,0) circle (3pt);
\fill [color=black] (0,1) circle (3pt);
\fill [color=black] (2,1) circle (3pt);
\fill [color=black] (0,2) circle (3pt);
\fill [color=black] (2,2) circle (3pt);
\fill [color=black] (0,3) circle (3pt);
\fill [color=black] (2,3) circle (3pt);

\fill [color=black] (-3,0) circle (3pt);

\fill [color=black] (-3,1) circle (3pt);

\fill [color=black] (-3,2) circle (3pt);
\fill [color=black] (5,2) circle (3pt);
\fill [color=black] (-3,3) circle (3pt);
\fill [color=black] (5,3) circle (3pt);

\end{tikzpicture}
\caption{An edge $xy \in M_2$ disallows at most four edges in $M_F$ from being
  added to $M_2$.}
  \label{fig:2ptf}
\end{figure}

  \begin{align*}
    |M_2| &\ge \frac{|M_F|}{4}\\
          &\ge \frac{|M_1| - c - 4\advntg|M^*|}{4}\\
          &\ge \frac{1}{4}\left(\left(\frac{1}{2}-\advntg\right)|M^*| - c -
            4\advntg|M^*|\right)
          &&\text{by~\eqref{eq:3}}\,,\\
          &= \frac{1}{4}\left(\left(\frac{1}{2}- 5\advntg\right)|M^*| - c \right)\,.
  \end{align*}
  Now, each edge in $M_2$ gives one augmentation after the second pass.  To see
  this, we observe that for any $x\in V_2$, at any point in the algorithm, $x$
  and $P(x)$ are either both marked or both unmarked.  So when an edge
  $xy\in M_2$ arrives, $x$ and $y$ are unmarked, and $P(x)$ and $P^{-1}(y)$ (if
  it exists) are also unmarked, otherwise one of $x$ and $y$ would have been
  marked and $xy$ would not have been added to $M_2$.  Since both $P(x)$ and
  $P^{-1}(y)$ were unmarked, we can use the augmenting path
  $\{M'_1(\{P(x)\}),M_0(\{x\}),xy\}$.  Hence we get at least
  \[
    \max\left\{c,  \frac{1}{4}\left(\left(\frac{1}{2}- 5\advntg\right)|M^*| - c
      \right) \right\} 
  \]
  augmentations after the third pass.  This is minimized by setting
  \begin{align*}
    c &= \frac{1}{4}\left(\left(\frac{1}{2}- 5\advntg\right)|M^*| - c \right)\\
    &=\frac{1}{5}\left(\left(\frac{1}{2}- 5\advntg\right)|M^*| \right)\\
    &=\left(\frac{1}{10}- \advntg\right)|M^*| \,.
  \end{align*}
  So we get the following bound:
  \[
      |M| \ge |M_0| +\left(\frac{1}{10} - \advntg\right)|M^*|
        \ge \left(\frac{1}{2} + \advntg\right)|M^*| +\left(\frac{1}{10} -
          \advntg\right)|M^*|
        = \left(\frac{1}{2} + \frac{1}{10} \right)|M^*|\,.  \qedhere
  \]
  The tight example is shown in~\Cref{fig:3ptfte}.
\end{proof}
\begin{figure}
\centering
\begin{tikzpicture}[line width=0.5mm]

\draw [red] (0,0) -- (2,0);
\draw [red] (0,1) -- (2,1);
\draw [red] (0,2) -- (2,2);

\draw [dash pattern={on 7pt off 2pt on 1pt off 3pt}] (2,0) -- (4,0);
\draw [dash pattern={on 7pt off 2pt on 1pt off 3pt}] (2,1) -- (4,1);
\draw [dash pattern={on 7pt off 2pt on 1pt off 3pt}] (-2,2) -- (0,2);
\draw [dash pattern={on 7pt off 2pt on 1pt off 3pt}] (-2,0) -- (0,0);
\draw [dash pattern={on 7pt off 2pt on 1pt off 3pt}] (0,1) -- (2,2);

\draw [dashed,orange] (0,2) -- (-2,0);
\draw [dashed,orange] (2,1) -- (4,0);

\draw [red] (6,1.85) -- (8,1.85);
\draw [dash pattern={on 7pt off 2pt on 1pt off 3pt}] (6,1.1) -- (8,1.1);
\draw [dashed, orange] (6,0.35) -- (8,0.35);

\draw (9.5,1.85) node[anchor=east,text width=1cm] {$M_0$};
\draw (9.5,1.1) node[anchor=east,text width=1cm] {$M^*$};
\draw (9.5,0.35) node[anchor=east,text width=1cm] {$M_1$};

\draw (0,3) node[anchor=north] {\scriptsize $A(M_0)$};
\draw (-2,3) node[anchor=north] {\scriptsize $B\setminus B(M_0)$};
\draw (2,3) node[anchor=north] {\scriptsize $B(M_0)$};
\draw (4,3) node[anchor=north] {\scriptsize $A\setminus A(M_0)$};

\fill [color=black] (0,0) circle (3pt);
\fill [color=black] (2,0) circle (3pt);
\fill [color=black] (0,1) circle (3pt);
\fill [color=black] (2,1) circle (3pt);
\fill [color=black] (0,2) circle (3pt);
\fill [color=black] (2,2) circle (3pt);

\fill [color=black] (-2,0) circle (3pt);
\fill [color=black] (4,1) circle (3pt);
\fill [color=black] (-2,2) circle (3pt);
\fill [color=black] (4,0) circle (3pt);

\end{tikzpicture}
\caption{Tight example for~\Cref{alg:3ptf}: $M_1$ has only two edges that land
  on bad vertices and cannot be augmented in the third pass. So
  $|M| = |M_0| = 3$ and $|M^*| = 5$.}
  \label{fig:3ptfte}
\end{figure}
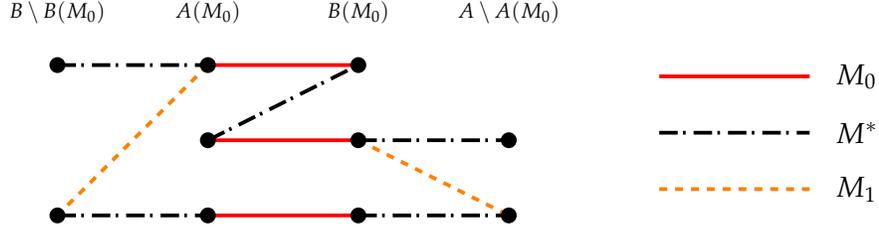


\section{Three Pass Algorithm for General Graphs}
\label{sec:3pgeneral}
We find a maximal matching $M_1$ in the first pass.  Then we use
\textproc{Improve-Matching}$()$ function from~\Cref{alg:2pimp2}, i.e.,
\begin{itemize}
  \item in the second pass, $M_2\gets\textproc{Improve-matching}(M_1,4,2)$, and
  \item in the third pass, $M_3\gets\textproc{Improve-matching}(M_2,5,2)$.
\end{itemize}
We observe that $M_1$ is $(1/2)$-approximate. Then by double application
of~\Cref{eq:gbound}, we get that $M_3$ is $(1/2 + 81/1600)\approx (1/2 + 1/19.753)$-approximate.

\section{Three Pass Algorithm for Bipartite Graphs: Suboptimal Analysis}
We now give an analysis of~\Cref{alg:3pkmm} that shows approximation ratio of
only $1/2 + 1/18$ that is based on Konrad et al.'s~\cite{kmm} analysis for their
two-pass algorithm for bipartite graphs.  Afterward, we demonstrate that by not
considering the distribution of lengths of augmenting paths, we may prove an
approximation ratio of at most $1/2 + 1/14$.  The better and tight analysis
appears in~\Cref{sec:3pkmm}.
\begin{theorem}
  \label{thm:3pkmm}
  \Cref{alg:3pkmm} is a three-pass, semi-streaming,
  $(1/2 +1/18)$-approximation algorithm for maximum matching in bipartite
  graphs.
\end{theorem}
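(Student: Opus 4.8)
The plan is to follow the same three-step pipeline as the tight proof of \Cref{thm:3pkmmimpimp} --- lower bound $|M_A|$, pass to the sub-matching of $M_A$ on ``good'' vertices, then lower bound $|M_B|$ --- but to credit good vertices using only the coarse $3$-augmentable-edge count of \Cref{lem:kmmlem1}, rather than exploiting augmenting paths of every length. This single change is what makes the bound lossy. As usual I would fix $M^*$ so that all nonaugmenting components of $M_0 \cup M^*$ are single edges, write $|M_0| = (1/2 + \advntg)|M^*|$, and let $k_i$ be the number of $i$-augmenting paths, so that $\sum_i k_i = |M^*| - |M_0| = (1/2 - \advntg)|M^*|$. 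Since the algorithm only augments $M_0$, we always have $|M| \ge |M_0|$, so it suffices to produce a competing lower bound on $|M|$ that decreases in $\advntg$ and then optimize.

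For the first step I would argue exactly as in \Cref{thm:3pkmmimpimp}: the leading $M^*$-edges $b_1a_1$ of the augmenting paths (one per path, of any length) form a matching of size $\sum_i k_i$ inside $F_2$, so maximality of $M_A$ gives $|M_A| \ge \tfrac12\sum_i k_i$. The lossy step is the second one. I would declare a vertex of $A(M_0)$ good only when it is the endpoint $a_1$ of a $3$-augmenting path, and invoke \Cref{lem:kmmlem1} to bound the number of bad vertices: at most $4\advntg|M^*|$ edges of $M_0$ are non-$3$-augmentable, so at most $4\advntg|M^*|$ of the matched $A$-vertices are bad. Deleting from $M_A$ the edges incident to bad vertices leaves $M'_A$ with $|M'_A| \ge |M_A| - 4\advntg|M^*|$. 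Finally, the completing $M^*$-edges $b_2a_2$ attached to $M'_A$ form a matching in $F_3$, so maximality of $M_B$ in the third pass converts half of them into genuine augmentations, $|M_B| \ge \tfrac12|M'_A|$.

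Chaining these inequalities and substituting $\sum_i k_i = (1/2 - \advntg)|M^*|$ and $|M_0| = (1/2 + \advntg)|M^*|$ yields $|M| = |M_0| + |M_B| \ge (5/8 - \tfrac54\advntg)|M^*|$. Taking the maximum of this with the trivial bound $|M| \ge |M_0| = (1/2 + \advntg)|M^*|$ and equating the two expressions gives worst-case $\advntg = 1/18$, hence the $(1/2 + 1/18)$-approximation.

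The crux --- and the precise source of suboptimality --- is the good-vertex accounting. In the tight analysis of \Cref{thm:3pkmmimpimp} one recognizes that an augmenting path of length $i$ still contributes a good vertex, namely $a_{(i-1)/2}$, so the bad vertices number only $|M_0| - \sum_i k_i = 2\advntg|M^*|$; replacing this exact count by the weaker $4\advntg|M^*|$ from \Cref{lem:kmmlem1} (which sees only $3$-augmenting paths) is exactly the factor-two loss in the bad-vertex budget that degrades the guarantee from $1/10$ to $1/18$. Everything else in the two arguments is identical, so the only real work is to set up this coarser accounting cleanly and carry the $\advntg$-dependence through the final optimization.
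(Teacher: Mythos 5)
Your proposal is correct and takes essentially the same route as the paper's own proof of \Cref{thm:3pkmm}: the identical chain $|M_A| \ge \tfrac{1}{2}\bigl(|M^*|-|M_0|\bigr)$, then discarding at most $4\advntg|M^*|$ edges of $M_A$ incident to non-$3$-augmentable edges via \Cref{lem:kmmlem1}, then $|M_B|$ at least half the surviving amount, balanced against $|M|\ge|M_0|$ to land at $\advntg = 1/18$. Your framing via the $k_i$-decomposition and good vertices is only cosmetic (under the normalization $\sum_i k_i = |M^*|-|M_0|$, the bounds coincide with the paper's), and your closing diagnosis of the factor-two loss in the bad-vertex count relative to \Cref{thm:3pkmmimpimp} matches the paper's tight analysis.
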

\begin{proof}
  As usual, let $|M_0| = (1/2 + \advntg) |M^*|$.  Since $M_0$ is a maximal
  matching, there are $|B(M^*) \setminus B(M_0)|$ edges of $M^*$ that are also
  in $F_2$.  We have
  \[
    |B(M^*) \setminus B(M_0)|\ge |B(M^*)|-|B(M_0)|= |M^*| - |M_0|\,,
  \]
  and since $M_A$ is maximal, we then get the following:
  \begin{equation}
    \label{eq:mabound}
    |M_A| \ge \frac{1}{2}|B(M^*) \setminus B(M_0)|
    \ge \frac{1}{2}(|M^*| - |M_0|)
    = \frac{1}{2}\left(1  - \left(\frac{1}{2} + \advntg\right)\right)|M^*|    
    = \frac{1}{2}\left(\frac{1}{2}  - \advntg\right)|M^*|.    
  \end{equation}
  By~\Cref{lem:kmmlem1}, there are at most $4\advntg|M^*|$ non-$3$-augmentable
  edges in $M_0$.  Which means that at least $|M_A| - 4\advntg|M^*|$ edges of
  $M_A$ are incident on $3$-augmentable edges of $M_0$; therefore there is a
  matching of size at least $|M_A| - 4\advntg|M^*|$ in $F_3$.  Since we output a
  maximal matching in $F_3$, we get at least $(1/2)(|M_A| - 4\advntg|M^*|)$
  augmentations after the third pass.  So we get the following bound:
  \begin{align*}
    |M| &\ge |M_0| + \frac{1}{2}(|M_A| - 4\advntg|M^*|)\\
        &\ge |M_0| + \frac{1}{2}\left(\frac{1}{2}\left(\frac{1}{2} - \advntg\right)
          - 4\advntg\right)|M^*| &&\text{ by~\eqref{eq:mabound}}\,,\\
        &= |M_0| + \left(\frac{1}{8} -\frac{9}{4}\advntg\right)|M^*|\\
        &=  \left(\frac{1}{2}  + \advntg\right)|M^*| +
          \left(\frac{1}{8} -\frac{9}{4}\advntg\right)|M^*|
        &&\text{ because $|M_0| = (1/2 + \advntg) |M^*|$}\,,\\
        &= \left(\frac{1}{2} + \frac{1}{8} - \frac{5}{4}\advntg\right)|M^*|\,.    
  \end{align*}
  We also have $|M| \ge |M_0| = (1/2 + \advntg)|M^*|$.  As $\advntg$ increases, the
  former bound deteriorates and the latter improves, so the worst case $\advntg$ is
  when these two bounds are equal, which happens at $\advntg = 1/18$, and the
  approximation ratio we get is $1/2 + 1/18$.
\end{proof}

\subsection{Improved Analysis Without Considering Longer Augmenting Paths}
\label{sec:three-pass-algorithm}
We can analyze~\Cref{alg:3pkmm} better if we bound $|M_A|$ more carefully.  The
claim is that at least $(1/2 - 7\advntg)|M^*|/2$ edges of $M_A$ are incident on
$3$-augmentable edges of $M_0$.  Let $A_G \subseteq A(M_0)$ be the set of
vertices in $A$ that are endpoints of $3$-augmentable edges of $M_0$; also, let
$A_N = A(M_0)\setminus A_G$.  So there is a matching of size at least $|A_G|$ in
$F_2$ that covers $A_G$.  Any maximal matching in $F_2$ has at least
$(|A_G| - |A_N|)/2$ edges that are incident on $A_G$.  To see the
claim, we use the facts $|A_G| \ge (1/2 - 3\advntg)|M^*|$ and
$|A_N| \le 4\advntg|M^*|$.  So there is a matching of size at least
$(1/2 - 7\advntg)|M^*|/2$ in $F_3$.  We output a maximal matching in $F_3$; hence
we get at least $(1/2 - 7\advntg)|M^*|/4$ augmentations after the third pass.  So
we get the following bound:
\begin{align*}
  |M| &\ge |M_0| + \frac{1}{4}\left(\frac{1}{2} - 7\advntg\right)|M^*|\\
      &=  \left(\frac{1}{2}  + \advntg\right)|M^*| +
        \frac{1}{4}\left(\frac{1}{2} - 7\advntg\right)|M^*|\\
      &= \left(\frac{1}{2} + \frac{1}{8} - \frac{3}{4}\advntg\right)|M^*|\,.     
\end{align*}
where the second inequality is by~\eqref{eq:mabound}.  We also have
$|M| \ge |M_0| = (1/2 + \advntg)|M^*|$, so the worst case $\advntg$ is when these two
bounds are equal, which happens at $\advntg = 1/14$ and the approximation ratio we
get is $1/2 + 1/14$, and we get the following theorem.  
\begin{theorem}
  \label{thm:3pkmmimp}
  \Cref{alg:3pkmm} is a three-pass, semi-streaming,
  $(1/2 +1/14)$-approximation algorithm for maximum matching in bipartite
  graphs.  
\end{theorem}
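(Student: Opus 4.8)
The plan is to refine the crude estimate of $|M_A|$ used in the proof of \Cref{thm:3pkmm} so as to count only the portion of $M_A$ that actually lands on $3$-augmentable edges of $M_0$. As in the earlier analyses, I would set $|M_0| = (1/2 + \advntg)|M^*|$ and establish two competing lower bounds on the output size: the trivial $|M| \ge |M_0| = (1/2 + \advntg)|M^*|$, which improves as $\advntg$ grows, and a second bound arising from the third-pass augmentations, which degrades as $\advntg$ grows. The worst case is the value of $\advntg$ at which the two bounds coincide.

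The heart of the argument is a sharper count of the edges of $M_A$ incident on $3$-augmentable $M_0$-edges. Let $A_G \subseteq A(M_0)$ be the $A$-endpoints of $3$-augmentable edges of $M_0$ and set $A_N := A(M_0) \setminus A_G$. Each $a \in A_G$ carries an $M^*$-edge into $B \setminus B(M_0)$, so these $M^*$-edges form a matching in $F_2$ of size $|A_G|$ saturating $A_G$. I would then show that any maximal matching $M_A$ in $F_2$ has at least $(|A_G| - |A_N|)/2$ edges incident on $A_G$. Feeding in $|A_G| \ge (1/2 - 3\advntg)|M^*|$ and $|A_N| \le 4\advntg|M^*|$ from \Cref{lem:kmmlem1} gives at least $(1/2 - 7\advntg)|M^*|/2$ such edges, hence a matching of this size in $F_3$.

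Since the third pass computes a maximal matching in $F_3$, at least half of these edges yield valid augmentations, producing at least $(1/2 - 7\advntg)|M^*|/4$ augmentations. Combining, $|M| \ge |M_0| + (1/2 - 7\advntg)|M^*|/4 = (1/2 + 1/8 - (3/4)\advntg)|M^*|$. Equating this with the trivial bound $(1/2 + \advntg)|M^*|$ gives $\advntg = 1/14$ and the claimed approximation ratio $1/2 + 1/14$.

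The main obstacle is the counting claim that any maximal $M_A$ in $F_2$ leaves at least $(|A_G| - |A_N|)/2$ edges incident on $A_G$. To prove it, let $x$ be the number of $M_A$-edges whose $A$-endpoint lies in $A_G$ and $y$ the number whose $A$-endpoint lies in $A_N$, so $y \le |A_N|$. Fix the saturating matching $M_G$ above. By maximality of $M_A$, every $a \in A_G$ is either itself matched by $M_A$ (there are $x$ such) or has its $M_G$-partner matched by $M_A$; the partners of the $|A_G| - x$ remaining vertices are distinct and each lies in a distinct $M_A$-edge, so $|A_G| - x \le x + y \le x + |A_N|$, which rearranges to $x \ge (|A_G| - |A_N|)/2$. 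Everything else is routine bookkeeping with the two competing bounds.
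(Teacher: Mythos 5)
Your proposal is correct and follows essentially the same route as the paper's analysis in its appendix: the same $A_G$/$A_N$ split, the same bound of $(|A_G|-|A_N|)/2$ edges of $M_A$ landing on $3$-augmentable edges via \Cref{lem:kmmlem1}, and the same balancing of $|M|\ge(1/2+1/8-(3/4)\advntg)|M^*|$ against $|M|\ge(1/2+\advntg)|M^*|$ at $\advntg=1/14$. Your only addition is an explicit proof of the counting claim for maximal matchings in $F_2$, which the paper asserts without detail; your argument for it is sound.
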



\section{A Note on the Analysis by Esfandiari et al.~\cite{Esf2017}}
\label{sec:issue-gener-lemma}
We demonstrate with an example that the analysis of the algorithm by Esfandiari
et al.~\cite{Esf2017} given for bipartite graphs cannot be extended for
triangle-free graphs to get the same approximation ratio.
See~\Cref{fig:ehmcetf}.  Lemma~6 in their paper, as they correctly claim, holds
only for bipartite graphs and not for triangle-free graphs.  Our algorithm
in~\Cref{sec:simple-two-pass} is essentially the same algorithm except for the
post-processing step; we augment the maximal matching computed in the first pass
greedily, whereas they use an offline maximum matching algorithm.  We have
highlighted some other comparison points in~\Cref{par:note}.

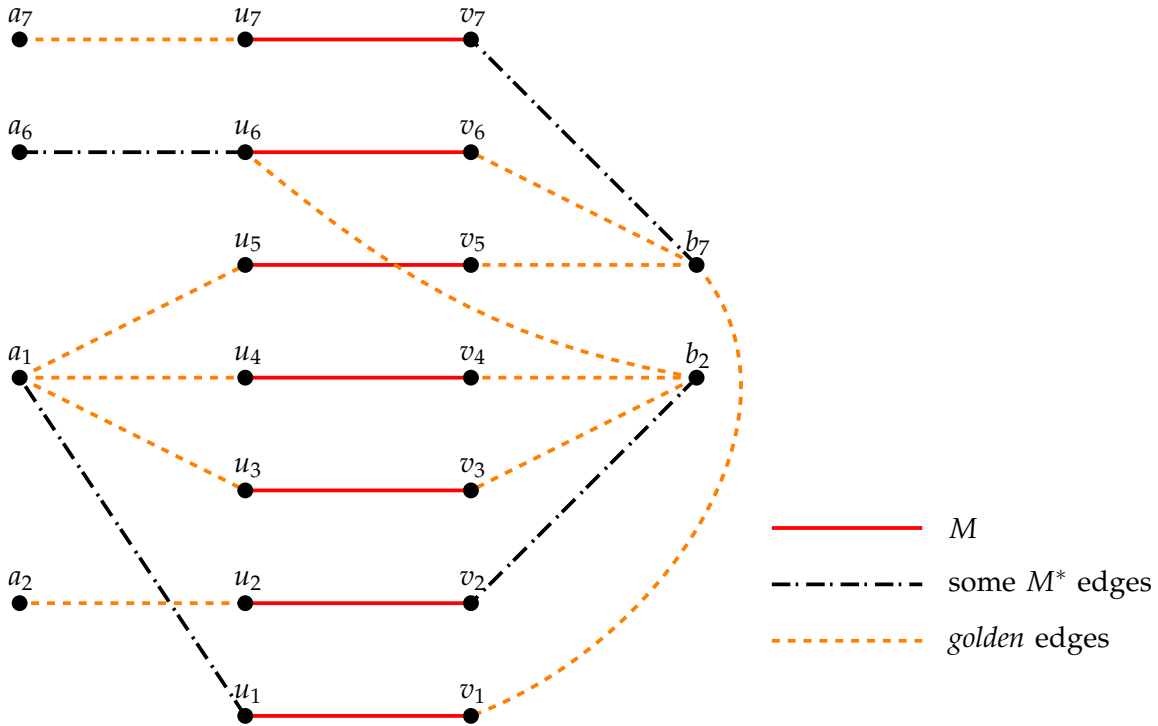
\begin{figure}
\centering
\begin{tikzpicture}[line width=0.5mm]
\draw [red] (0,0) -- (3,0);
\draw [red] (0,1.5) -- (3,1.5);
\draw [red] (0,3) -- (3,3);
\draw [red] (0,4.5) -- (3,4.5);
\draw [red] (0,6) -- (3,6);
\draw [red] (0,7.5) -- (3,7.5);
\draw [red] (0,9) -- (3,9);

\draw [dash pattern={on 7pt off 2pt on 1pt off 3pt}]  (-3,7.5) -- (0,7.5);
\draw [dash pattern={on 7pt off 2pt on 1pt off 3pt}]  (0,0) -- (-3,4.5);
\draw [dash pattern={on 7pt off 2pt on 1pt off 3pt}]  (6,4.5) -- (3,1.5);
\draw [dash pattern={on 7pt off 2pt on 1pt off 3pt}]  (6,6) -- (3,9);

\draw [dashed,orange]  (6,6) -- (3,7.5);
\draw [dashed,orange]  (6,6) -- (3,6);
\draw [dashed,orange]  (6,6) to[out=-50,in=20] (3,0);
\draw [dashed,orange]  (6,4.5) to[out=170,in=-40] (0,7.5);
\draw [dashed,orange]  (6,4.5) -- (3,4.5);
\draw [dashed,orange]  (6,4.5) -- (3,3);
\draw [dashed,orange] (-3,9) -- (0,9);
\draw [dashed,orange] (-3,4.5) -- (0,6);
\draw [dashed,orange] (-3,4.5) -- (0,4.5);
\draw [dashed,orange] (-3,4.5) -- (0,3);
\draw [dashed,orange] (0,1.5) -- (-3,1.5);

\draw [red] (7,2.5) -- (9,2.5);
\draw [dash pattern={on 7pt off 2pt on 1pt off 3pt}] (7,1.75) -- (9,1.75);
\draw [dashed, orange] (7,1) -- (9,1);

\draw (12.5,2.5) node[anchor=east,text width=3cm] {$M$};
\draw (12.5,1.75) node[anchor=east,text width=3cm] {some $M^*$ edges};
\draw (12.5,1) node[anchor=east,text width=3cm] {\emph{golden} edges};

\draw (1,0.3) node[anchor=east,text width=1cm] {$u_1$};
\draw (4,0.3) node[anchor=east,text width=1cm] {$v_1$};
\draw (1,1.8) node[anchor=east,text width=1cm] {$u_2$};
\draw (4,1.8) node[anchor=east,text width=1cm] {$v_2$};
\draw (1,3.3) node[anchor=east,text width=1cm] {$u_3$};
\draw (4,3.3) node[anchor=east,text width=1cm] {$v_3$};
\draw (1,4.8) node[anchor=east,text width=1cm] {$u_4$};
\draw (4,4.8) node[anchor=east,,text width=1cm] {$v_4$};
\draw (1,6.3) node[anchor=east,text width=1cm] {$u_5$};
\draw (4,6.3) node[anchor=east,text width=1cm] {$v_5$};
\draw (1,7.8) node[anchor=east,text width=1cm] {$u_6$};
\draw (4,7.8) node[anchor=east,text width=1cm] {$v_6$};
\draw (1,9.3) node[anchor=east,text width=1cm] {$u_7$};
\draw (4,9.3) node[anchor=east,text width=1cm] {$v_7$};

\draw (-2,9.3) node[anchor=east,text width=1cm] {$a_7$};
\draw (-2,7.8) node[anchor=east,text width=1cm] {$a_6$};
\draw (-2,4.8) node[anchor=east,text width=1cm] {$a_1$};
\draw (-2,1.8) node[anchor=east,text width=1cm] {$a_2$};

\draw (7,4.8) node[anchor=east,,text width=1cm] {$b_2$};
\draw (7,6.3) node[anchor=east,text width=1cm] {$b_7$};

\fill [color=black] (0,0) circle (3pt); 
\fill [color=black] (3,0) circle (3pt); 
\fill [color=black] (0,1.5) circle (3pt); 
\fill [color=black] (3,1.5) circle (3pt); 
\fill [color=black] (0,3) circle (3pt); 
\fill [color=black] (3,3) circle (3pt); 
\fill [color=black] (0,4.5) circle (3pt); 
\fill [color=black] (3,4.5) circle (3pt); 
\fill [color=black] (0,6) circle (3pt); 
\fill [color=black] (3,6) circle (3pt); 
\fill [color=black] (0,7.5) circle (3pt); 
\fill [color=black] (3,7.5) circle (3pt); 
\fill [color=black] (0,9) circle (3pt); 
\fill [color=black] (3,9) circle (3pt); 

\fill [color=black] (-3,4.5) circle (3pt); 
\fill [color=black] (-3,7.5) circle (3pt); 
\fill [color=black] (-3,9) circle (3pt); 
\fill [color=black] (6,6) circle (3pt); 
\fill [color=black] (6,4.5) circle (3pt); 
\fill [color=black] (-3,1.5) circle (3pt); 

\end{tikzpicture}
\caption{Example demonstrating that Lemma~6 in Esfandiari et al.~\cite{Esf2017}
  does not hold when the input graph is not bipartite but is triangle-free.  We
  use $k = 3$.  For an $M$ edge $u_iv_i$, there are two $M^*$ edges incident on
  it, which are $a_iu_i$ and $v_ib_i$, and some of the $M^*$ edges are not
  shown, but all golden edges are shown, which we call support edges or denote
  by $S$ in our terminology.  It can be seen from this example that their
  algorithm is not a $(1/2+1/12)$-approximation algorithm for triangle free
  graphs, because out of the seven $3$-augmentable edges in $M$, only one will
  get augmented, thereby giving a worse approximation
  ratio.} \label{fig:ehmcetf}
\end{figure}



\end{document}
 